\newtheorem{lemma}{Lemma}
\newtheorem{theorem}{Theorem}
\newtheorem{proposition}{Proposition}
\newtheorem{remark}{Remark}
\newtheorem{definition}{Definition}
\newtheorem{example}{Example}
\def\rr{\mathbb{R}}
\def\cM{\mathcal{M}}
\def\cB{\mathcal{B}}
\def\cL{\mathcal{L}}
\def\cW{\mathcal{W}}
\def\cA{\mathcal{A}}
\def\cD{\mathcal{D}}
\def\cP{\mathcal{P}}
\def\cC{\mathcal{C}}
\renewcommand\labelenumi{\roman{enumi})}
\renewcommand\theenumi\labelenumi
\newcommand{\tg}{\texttt{2P1EG}\xspace}
\newcommand{\mpg}{\texttt{MP1EG}\xspace}
\newcommand{\hull}[1]{\mathbb{H}\left\{#1\right\}}
\def\eps{\varepsilon}
\def\rc{r}             
\def\creg{\mathcal{D}}      
\def\cregm{\mathcal{M}}     
\newcommand{\op}[1]{\mathtt{int}\left\{#1\right\}}  
\newcommand{\sign}[1]{\text{sign}(#1)}           
\newcommand{\ds}{\displaystyle}
\title{\vspace*{-4mm}A Family of Switching Pursuit Strategies\\for a Multi-Pursuer Single-Evader Game}
\author{Marco Casini, Andrea Garulli\thanks{M.~Casini and A.~Garulli are with the Dipartimento di
		Ingegneria dell'Informazione e Scienze Matematiche, Universit\`a di Siena, via Roma~56, 53100 Siena, Italy. E-mail: marco.casini@unisi.it,~andrea.garulli@unisi.it.}}
\date{}
\begin{document}
\maketitle \thispagestyle{empty}
	
\begin{abstract}
This paper introduces a new family of pursuit strategies for multi-pursuer single-evader games in a planar environment. They leverage conditions under which the minimum-time solution of the game becomes equivalent to that of a suitable two-pursuer single-evader game.
This enables the design of strategies in which the pursuers first aim to meet such conditions, and then transition to a two-pursuer game once they are satisfied.
As a consequence, naive strategies that are in general unsuccessful, can be turned into winning strategies by switching to the appropriate two-pursuer game. Moreover, it is shown via numerical simulations that the switching mechanism significantly enhances the performance of existing pursuit algorithms, like those based on Voronoi partitions. 
\end{abstract}

\noindent Keywords: Pursuit-evasion games; multi-agent systems; autonomous agents.

\section{Introduction}\label{sec:intro}

Pursuit-evasion games are intensively studied because they allow to tackle a variety of problems involving multi-agent systems.
The large number of applications treated in the literature range from mobile robotics \cite{chung2011,noori2016} to optimal robot control \cite{macias2018auto}, space operations \cite{venigalla2021jgcd,shi2023aa}, predator-prey dynamics in biological systems \cite{lin2023cdc}, and beyond. Recent years have witnessed a surge in interest towards games involving multiple pursuers \cite{zhou2016,chen2016,kothari2017}, multiple evaders \cite{scott2018auto,bajaj2021cooperative}, or both \cite{lopez2019tac,garcia2020multiple}, leading to a broad spectrum of problem formulations and solutions.

Traditionally, pursuit-evasion problems are approached by framing them as differential games \cite{isaacs1965,Basar1982}.
Unfortunately, an optimal solution is available only for a limited number of games.
An example is that of the minimum-time solution of the two-pursuer one-evader game in a two-dimensional environment \cite{garcia2017,pachter2019,Casini2022jirs}.
For the three-pursuer one-evader game, only suboptimal solutions have been derived \cite{pshenichnyi1976}, although it has been established that optimal strategies must involve switching between different linear sub-paths \cite{ibragimov2016ijpam,Casini2022scl}.
An open-loop minimum-time solution for games with more than three pursuers has been proposed in \cite{von_moll2019}, showing that the agents must move along linear paths. However, a closed-loop solution to this differential game remains elusive.
The above observations motivate the investigation of conditions under which the optimal solution of the two-pursuer game can be exploited to devise successful strategies for games involving more than two pursers.

This paper proposes a new family of pursuit strategies for a multi-pursuer single-evader game, in which agents move in a planar environment with simple motion and equal maximum speed. The core idea is to derive and leverage conditions under which the minimum-time game boils down to a two-pursuer game. This allows one to design pursuit strategies aiming at enforcing these conditions and transitioning to a two-pursuer game when they are satisfied. Although these switching pursuit strategies are in general suboptimal, they consistently reduce capture time compared to strategies without switching. As a result, they enhance the performance of existing pursuit techniques and enable the development of new effective multi-pursuer strategies.

The main contributions of the paper are as follows. First, a solution of the game of kind is derived, i.e., a thorough characterization of all agent configurations in which capture of the evader can be achieved. Then, conditions under which the minimum-time pursuit strategy simplifies to that of a two-pursuer single-evader game are established. The third contribution is the definition of new families of switching pursuit strategies, based on the previously derived conditions, that guarantee capture of the evader. Specifically, simple strategies like pure pursuit or fixed-point pursuit, which in general fail to capture the evader, are transformed into successful strategies through a switch to the appropriate two-pursuer game. Furthermore, it is demonstrated through numerical simulations that switching can substantially reduce the capture time, even for strategies that inherently guarantee capture, such as the Voronoi partition-based approach introduced in \cite{zhou2016}.
On the whole, the proposed switching mechanism provides a powerful framework for the design of a variety of new pursuit strategies in multi-pursuer single-evader games.
A preliminary version of this work has been presented in \cite{casini2024cdc}.

As for the paper organization, the formulation of the multi-pursuer single-evader game is given in Section \ref{sec:formulation}, along with a summary of the  two-pursuer game optimal solution. Section \ref{sec:2P1E_multiP} provides the solution to the game of kind and the conditions under which the minimum-time multi-pursuer game boils down to a two-pursuer one. This leads to the introduction of new switching pursuit strategies in Section \ref{sec:strategies}, whose performance is assessed in Section \ref{sec:simulations} by means of numerical simulations. Concluding remarks and future developments are reported in Section~\ref{sec:conclusions}.

\subsection{Notation and definitions}\label{sec:notation}
Given a vector $V$, its transpose is denoted by $V'$, while $\|V\|$ is its Euclidean norm. For $V,W\in\rr^2$, we denote by $\overline{VW}$ the segment with $V$ and $W$ as endpoints. 
$\cC(P,r)=\{Q\in\rr^2\colon \|Q-P\|\le r\}$ denotes a circle centered in $P\in\rr^2$ with radius $r$. 
For a closed set $\cA$, $\partial\cA$ is the boundary of $\cA$, while $\hull{\cA}$ denotes the convex hull of $\cA$. Given two sets $\cA,\cB$, $\cA\backslash\cB=\{Q\colon Q\in\cA,~Q\notin\cB\}$. The interior of $\cA$ is denoted by $\op{\cA}=\cA\backslash\partial\cA$.

\section{Multi-pursuer single-evader game}\label{sec:formulation}
A \emph{multi-pursuer single-evader game} (hereafter referred to as \mpg) is a game in which $p$ pursuers aim at capturing one evader. The positions of the pursuers and the evader at time $t$ are denoted by $P_k(t)\in\rr^2,k=1,\ldots p$, and $E(t)\in\rr^2$, respectively. The players move in the plane in simple motion, that is
\begin{equation*}\label{eq:ct}
	\left\{
	\begin{array}{l}
		\dot{E}(t)=v_E(t)\,,\\
		\dot{P}_k(t)=v_{P_k}(t)\,,~k=1,\ldots,p
	\end{array}
	\right.
\end{equation*}
where $v_E(t),v_{P_k}(t)\in\rr^2$ are the velocity vectors of the players. 
We assume that all agents have the same maximum velocity, $\|v_E(t)\|\le v_{max}$, $\|v_{P_k}(t)\|\le v_{max}$, $k=1,\ldots,p$, $\forall t\ge0$. Without loss of generality, it is set $v_{max}=1$ and the initial game time at $t=0$.
At each time $t$, $v_E(t)$ and $v_{P_k}(t)$ are computed as functions of the state of the game $\xi(t)=[E'(t)~P_1'(t)~\dots~P_p'(t)]'$. Notice that the agents do not have information on the velocities of the other players. 
The \emph{strategies} of the agents are functions mapping the current state into their chosen velocity vector, $v_E=\phi(\xi)$, $v_{P_k}=\psi_k(\xi)$. Dependence on time is omitted when it is clear from the context.

The aim of the pursuers is to capture the evader, while the evader tries to avoid capture or to protract the game as long as possible. Capture occurs when the distance between at least one pursuer and the evader is equal to the radius of capture $\rc>0$. Hereafter, only initial conditions such that 
\begin{equation}\label{eq:no_terminate}
	\|P_k(0)-E(0)\|>\rc,\quad k=1,\ldots,p
\end{equation}
will be considered. 
A \emph{winning pursuit strategy} is a strategy adopted by the pursuers which guarantees that capture occurs in finite time whatever is the strategy adopted by the evader. The capture time $T$ is such that
\begin{equation}\label{eq:capturecond}
	\|P_k(T)-E(T)\|=\rc, \text{~for some~} k\in\{1,\ldots,p\}.
\end{equation}
A \emph{minimum time pursuit strategy} is a strategy $\psi=[ \psi_1,\dots,\psi_p]$ which is a solution of the min-max problem
\begin{equation}\label{eq:mintime}
	\min_{\psi} \max_\phi~ T.
\end{equation}
For brevity, we will refer to functions $\psi$ and $\phi$ solving problem \eqref{eq:mintime} as \emph{optimal strategies}.
To the best of our knowledge, \eqref{eq:mintime} is still an open problem for $p \geq 3$. Conversely, for $p=2$ the problem has been fully solved \cite{garcia2017,pachter2019,Casini2022jirs}. 
In the following, the solution of the \emph{two-pursuer one-evader game} (\tg) given in \cite{Casini2022jirs} is recalled. It will play a key role in the definition of the class of winning strategies for the \mpg proposed in this work.

\subsection{Minimum-time solution of the two-pursuer one-evader game}\label{subsec:recall_2P1E}

Consider two pursuers $P_1,P_2$ trying to capture one evader. 
Let us define the region
\begin{equation}\label{eq:capture_region}
	\creg_{12}=\op{\hull{\cC(P_1,\rc)\cup\cC(P_2,\rc)}}.
\end{equation}
First, a solution to the game of kind is provided \cite{Casini2022jirs}.
\begin{proposition}\label{prop:2v1kind}
	There exists a winning pursuit strategy for the \tg if and only if $E \in \creg_{12}$.
\end{proposition}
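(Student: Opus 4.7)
The plan is to prove the two directions separately using mainly geometric arguments.

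\emph{Necessity.} Assume $E \notin \creg_{12}$. Since $\creg_{12}$ is the interior of the closed convex hull $\hull{\cC(P_1,\rc)\cup\cC(P_2,\rc)}$, a supporting hyperplane argument produces a unit vector $n\in\rr^2$ such that $n'(E-P_k)\ge \rc$ for $k=1,2$. I would prescribe the evader strategy $v_E(t)\equiv n$ and note that
\begin{equation*}
n'(E(t)-P_k(t)) \;=\; n'(E(0)-P_k(0)) + \int_0^t \bigl(1 - n' v_{P_k}(s)\bigr)\,ds \;\ge\; \rc,
\end{equation*}
since $n' v_{P_k}\le \|v_{P_k}\|\le 1$. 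Hence $\|E(t)-P_k(t)\|\ge n'(E(t)-P_k(t))\ge \rc$ for all $t$ and both $k$. The strict initial condition \eqref{eq:no_terminate} then rules out the equality $\|E-P_k\| = \rc$: either $n'(E(0)-P_k(0)) > \rc$, in which case the chain of inequalities above is strict for all $t$; or $n'(E(0)-P_k(0)) = \rc$, which forces the component of $E(0)-P_k(0)$ orthogonal to $n$ to be nonzero. In the latter case a short case split on $v_{P_k}$ (either $v_{P_k} \equiv n$, which preserves the orthogonal component, or $v_{P_k} \ne n$ on a positive-measure set, which makes $n'(E-P_k) > \rc$ strictly thereafter) shows that $\|E-P_k\| > \rc$ persists. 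No capture is possible against this evader strategy.

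\emph{Sufficiency.} Assume $E \in \creg_{12}$. My plan is to invoke the minimum-time closed-loop strategy of the \tg, recalled in Section~\ref{subsec:recall_2P1E} from \cite{Casini2022jirs}, whose domain of validity is precisely $\creg_{12}$ and which achieves capture in finite time; the existence of this strategy furnishes a winning pursuit strategy. A self-contained alternative would be to devise a coordinated strategy that contracts an appropriate geometric potential; a convenient candidate is the area of the evader's non-dominated region
\begin{equation*}
\bigcap_{k=1,2}\bigl\{Q\in\rr^2 \colon \|Q-P_k\| - \|Q-E\| > \rc\bigr\},
\end{equation*}
which is bounded exactly when $E \in \creg_{12}$ and can be shrunk by steering each pursuer along the outward normal to its bounding hyperbolic branch, until the region collapses and capture is forced.

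\emph{Main obstacle.} The necessity direction is essentially a clean separating-hyperplane argument with a minor boundary technicality. The delicate part is sufficiency: appealing to the minimum-time solution of \cite{Casini2022jirs} imports a substantial prior result, while a self-contained contraction argument must cope with the fact that the non-dominated region can stretch along hyperbolic asymptotes and that establishing a uniform contraction rate, especially as $E$ approaches $\partial\creg_{12}$, is the main technical challenge. I therefore expect the paper to take the first route, invoking the two-pursuer minimum-time solution.
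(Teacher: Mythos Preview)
Your proposal is correct, but there is in fact nothing to compare it against: the paper does not prove Proposition~\ref{prop:2v1kind}. It is stated as a recalled result from \cite{Casini2022jirs}, alongside Proposition~\ref{prop:opt_strategies_2vs1}, and both are invoked without proof.

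Your necessity argument via a supporting hyperplane is exactly the device the paper itself deploys later, in the proof of Theorem~\ref{th:game_of_kind}, for the multi-pursuer game of kind; so it is fully in the spirit of the paper. The boundary case $n'(E(0)-P_k(0))=\rc$ can be handled more cleanly than your two-subcase split on $v_{P_k}$: set $f(t)=n'(E(t)-P_k(t))$, which is nondecreasing with $f(0)=\rc$. If $f(t)>\rc$ the conclusion is immediate; if $f(t)=\rc$, then the integrand vanishes a.e.\ on $[0,t]$, forcing $v_{P_k}=n$ a.e., whence $E(t)-P_k(t)=E(0)-P_k(0)$, which has norm strictly greater than $\rc$ by \eqref{eq:no_terminate}.

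For sufficiency, your expectation is confirmed: the paper treats the minimum-time \tg solution as an imported fact from \cite{Casini2022jirs}, so invoking it is consistent with the paper's stance and is not circular within this document. Your alternative area-contraction sketch is reasonable but, as you note, would require nontrivial work near $\partial\creg_{12}$; the paper makes no attempt in that direction.
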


Hence, set $\creg_{12}$ is the \emph{2-pursuer capture region} associated to $P_1$ and $P_2$; if the evader belongs to this region it will be captured by the pursuers in finite time.
An example is depicted in Fig.~\ref{fig:capture_region}. 
To simplify the treatment, w.l.o.g. we will refer to the reference frame reported in Fig.~\ref{fig:capture_region}, where the origin is the midpoint of the pursuers, and they lie on the $x$-axis, i.e., $P_1=[-d,0]'$, $P_2=[d,0]'$, $E=[x,y]'$. 
Let $E\in\creg_{12}\backslash(\cC(P_1,\rc)\cup\cC(P_2,\rc))$ and define
\begin{equation}\label{eq:function_fp}
	T_{12}=\ds\frac{\kappa \rc+|y|\sqrt{\kappa^2-4x^2(\rc^2-y^2)}	}{2(\rc^2-y^2)}
\end{equation}
where $\kappa=d^2-x^2+y^2-\rc^2$. 
Now, let us define the point $H_{12}$ satisfying
\begin{equation}\label{eq:point_Hp}
	\|E-H_{12}\|=\|P_1-H_{12}\|-\rc=\|P_2-H_{12}\|-\rc=T_{12}.
\end{equation}
Within the considered reference frame, one has $H_{12}=[0,\,h_{12}]'$ where
\begin{equation}\label{eq:solution_2vs1}
	h_{12}=
	\begin{cases}
		y+\sign{y}\sqrt{T_{12}^2-x^2} & \text{~~~if } y\neq 0\\[2mm]
		\pm\sqrt{T_{12}^2-x^2} & \text{~~~if } y=0.
	\end{cases}	
\end{equation}
The following result provides a solution to the minimum-time \tg \cite{Casini2022jirs}.
\begin{proposition}\label{prop:opt_strategies_2vs1}
	Let $p=2$. The strategies solving problem \eqref{eq:mintime} require each player to travel at maximum speed along a straight line to the point $H_{12}$ defined by \eqref{eq:point_Hp}. Then, the evader will be captured in $H_{12}$, at time $T_{12}$ given by \eqref{eq:function_fp}.
\end{proposition}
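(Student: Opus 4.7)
The plan is to prove Proposition~\ref{prop:opt_strategies_2vs1} by exhibiting the proposed straight-line strategies as a saddle point of the min-max game~\eqref{eq:mintime}, with value $T_{12}$ attained at the meeting point $H_{12}$. I would organize the argument in three stages.

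First, a direct verification that, under the proposed strategy profile, capture occurs at $H_{12}$ at exactly time $T_{12}$. Since all three agents travel at unit speed along straight lines toward $H_{12}$, the evader reaches $H_{12}$ at time $\|E(0)-H_{12}\|=T_{12}$, while each pursuer $P_k$ comes within distance $\rc$ of $H_{12}$ at time $\|P_k(0)-H_{12}\|-\rc=T_{12}$, exactly as prescribed by~\eqref{eq:point_Hp}. The explicit coordinates~\eqref{eq:solution_2vs1} and the closed form~\eqref{eq:function_fp} for $T_{12}$ then follow by solving the quadratic system obtained upon squaring the two equalities in~\eqref{eq:point_Hp} within the canonical frame $P_1=[-d,0]'$, $P_2=[d,0]'$, $E=[x,y]'$. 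At this stage one should also confirm that the resulting $H_{12}$ lies in $\creg_{12}$ whenever $E\in\creg_{12}\backslash(\cC(P_1,\rc)\cup\cC(P_2,\rc))$, so that the construction is geometrically meaningful.

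Second, to establish the saddle-point property, I would exploit the geometry of the dominance regions $\{Q\in\rr^2\colon \|Q-E(0)\|\le\|Q-P_k(0)\|-\rc\}$ of the evader against each pursuer. Their boundaries are the branches of the hyperbolas with foci $E(0)$ and $P_k(0)$ and constant distance difference $\rc$. The intersection of the two dominance regions is the evader's safe set, and $H_{12}$ is the unique point common to both boundaries, namely the point of the safe set farthest from $E(0)$. From this I would derive the two required inequalities: if the pursuers implement the straight-line strategy in feedback form, any deviation of the evader either leaves the target unchanged or moves it to a point reachable in time at most $T_{12}$; conversely, if the evader heads straight to $H_{12}$, no pursuer trajectory can reach capture distance before time $T_{12}$, since this would contradict the defining property of the hyperbolic envelopes on which $H_{12}$ sits.

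The main obstacle is the feedback-consistency step: one must argue that, under arbitrary unilateral deviation, the recomputed target $H_{12}(\xi(t))$ based on the current state evolves in a direction favourable to the non-deviating side, so that the capture time is not worsened for it. A clean way to do this is to verify that $T_{12}(\xi)$ is a value function satisfying the Isaacs equation along the optimal straight-line flow, and then invoke the standard verification theorem; alternatively, a purely geometric comparison of the evader's reachable disk and the two pursuer capture disks at every instant delivers the same conclusion, as in the argument of~\cite{Casini2022jirs}, to which I would defer for the full technical details.
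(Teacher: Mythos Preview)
The paper does not prove Proposition~\ref{prop:opt_strategies_2vs1}; it is stated as a recalled result from \cite{Casini2022jirs} (see the sentence immediately preceding Proposition~\ref{prop:2v1kind} and the citation after Proposition~\ref{prop:opt_strategies_2vs1}). There is therefore no in-paper proof to compare your attempt against.

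That said, your outline is essentially the standard route taken in the cited references \cite{garcia2017,pachter2019,Casini2022jirs}: identify $H_{12}$ as the farthest point of the evader's safe region, which is the intersection of the two Apollonius-type dominance regions bounded by hyperbola branches with foci $E,P_k$ and transverse-axis length $\rc$; then verify the saddle-point inequalities by comparing reachable disks. Your own closing remark already defers to \cite{Casini2022jirs} for the technical verification, which is exactly what the paper does. One minor correction: you should not require that $H_{12}\in\creg_{12}$; rather, $H_{12}$ lies on the perpendicular bisector of $\overline{P_1P_2}$ on the evader's side and may well fall outside $\creg_{12}$ (it typically does when $|y|$ is small). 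What matters is that $E\in\creg_{12}$ guarantees the two hyperbola branches intersect, so that $H_{12}$ exists.
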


\begin{figure}[t]
	\centering
	\includegraphics[width=0.85\columnwidth]{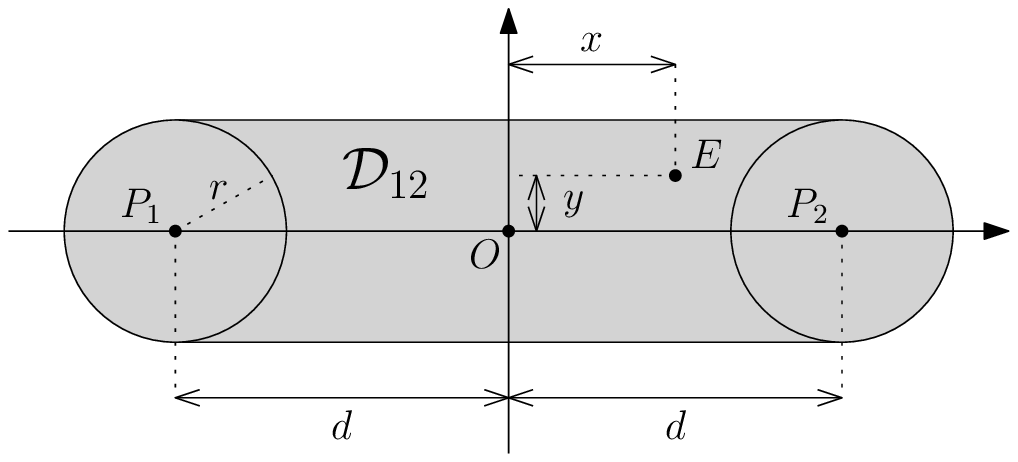}
	\caption{The 2-pursuer capture region for the \tg.}
	\label{fig:capture_region}
\end{figure}

\section{Game of kind and switching conditions}\label{sec:2P1E_multiP}

In this section, the solution of the game of kind for the \mpg is presented. Then, a condition is derived under which the \mpg boils down to a \tg.

For a pair of pursuers $P_i,P_j$, let $\creg_{ij}$ be the corresponding 2-pursuer capture region, given by \eqref{eq:capture_region}. If $E\in\creg_{ij}$, let us denote by $\psi^{ij}$ the minimum-time pursuit strategy for the \tg played by $P_i$ and $P_j$, and by $\phi^{ij}$ the corresponding optimal evader strategy.
Moreover, we denote by $T_{ij}$ the time needed by $P_i$ and $P_j$ to capture $E$ when the agents $P_i$, $P_j$ and $E$ play the optimal \tg strategies.
The capture time $T_{ij}$ and capture point $H_{ij}$ are computed according to \eqref{eq:function_fp}-\eqref{eq:point_Hp}, for each pair $i,j$ such that $E \in \cD_{ij}$.

Let us define the convex hull of the pursuer locations
\begin{equation}\label{eq:hull_P}
	\cP=\hull{P_1,P_2,\ldots,P_p}
\end{equation}
and introduce the \emph{multi-pursuer capture region}, defined as
\begin{equation}\label{eq:capture_region_multi}
	\cregm=\op{\hull{\bigcup_{i=1}^{p} \cC(P_i,\rc)}}.
\end{equation}	
The following result provides the solution of the game of kind for the \mpg.

\begin{theorem}\label{th:game_of_kind}
	If $E\in\cregm$ in \eqref{eq:capture_region_multi}, then there exists a winning pursuit strategy. On the contrary, if $E\notin\cregm$, the evader can avoid capture indefinitely.
\end{theorem}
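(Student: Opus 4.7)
The plan is to handle the two implications of the biconditional separately.

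For the necessity direction (if $E\notin\cregm$, the evader can avoid capture), I would invoke the separating hyperplane theorem for the closed convex set $\overline{\cregm}$. If $E(0)\notin\cregm$, there exists a unit vector $v\in\rr^2$ such that $v'(E(0)-Q)\ge 0$ for every $Q\in\overline{\cregm}$. Since each capture disc $\cC(P_i,\rc)$ is contained in $\overline{\cregm}$, plugging $Q=P_i(0)+\rc\,v$ yields $v'(E(0)-P_i(0))\ge \rc$ for every $i$. The evader then adopts the constant-velocity strategy $v_E(t)\equiv v$. Because each pursuer moves with speed at most one, $v'P_i(t)\le v'P_i(0)+t$ while $v'E(t)=v'E(0)+t$. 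Consequently $v'(E(t)-P_i(t))\ge \rc$ for all $t\ge 0$, which gives $\|E(t)-P_i(t)\|\ge \rc$. The strict inequality required to avoid capture follows from \eqref{eq:no_terminate}: the component of $E(t)-P_i(t)$ orthogonal to $v$ is frozen by the constant-velocity evader motion, so any initial slack in either the projected or the orthogonal coordinate is preserved for all time.

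For the sufficiency direction (if $E\in\cregm$, the pursuers can capture), I would exhibit a winning strategy. A useful preliminary step is to establish the Minkowski-sum representation $\overline{\cregm}=\cP\oplus\cC(0,\rc)$: any element of $\hull{\bigcup_i \cC(P_i,\rc)}$ can be written as $\sum_i \lambda_i(P_i+\rc\, u_i)$ with $\|u_i\|\le 1$, which equals $\sum_i \lambda_iP_i+\rc\sum_i \lambda_iu_i$, whose second summand has norm at most $\rc$ by convexity; the reverse inclusion is immediate. Hence $E\in\cregm$ is equivalent to $\mathrm{dist}(E,\cP)<\rc$. A concrete winning strategy is then supplied by the Voronoi-partition-based algorithm of \cite{zhou2016}: the pursuers act to shrink the evader's Voronoi cell intersected with $\cregm$, which forces $\mathrm{dist}(E,\cP)\to 0$, so that $\|E-P_k\|=\rc$ for some $k$ is attained in finite time.

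The crux of the argument is the sufficiency direction. The core difficulty is exhibiting pursuer controls that simultaneously (i) preserve the invariant $E(t)\in\cregm(t)$ along trajectories and (ii) drive a suitable Lyapunov functional (for instance the area of $\cregm(t)$, the diameter of $\cP(t)$, or the area of the evader's Voronoi cell intersected with $\cregm(t)$) to zero at a non-vanishing rate. Boundary effects require some care: when $E(t)$ approaches $\partial\cregm(t)$ the pursuer motions must be chosen so as not to push the evader outside $\cregm(t)$. Once (i) and (ii) are in place, a standard monotonicity and compactness argument yields capture in finite time.
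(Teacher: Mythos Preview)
Your necessity argument is essentially the paper's, carried out more explicitly. One phrasing is off: the orthogonal component of $E(t)-P_i(t)$ is \emph{not} ``frozen'' in general, since the pursuer can move sideways. What is true is that if the projected component ever stays equal to $\rc$ on an interval, the pursuer must be spending its entire speed in the $v$-direction there, which \emph{then} freezes the orthogonal component; together with \eqref{eq:no_terminate} this gives the strict inequality. So the conclusion stands, but the sentence as written overclaims.

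The sufficiency direction has a genuine gap. You invoke the Voronoi strategy of \cite{zhou2016} for every $E\in\cregm$, but that result is established only when $E$ lies in the pursuers' convex hull $\cP$. When $E\in\cregm\setminus\cP$, the evader's Voronoi cell is unbounded, and the area-shrinking Lyapunov argument of \cite{zhou2016} does not apply as stated. You recognise this yourself in the last paragraph (invariance of $\cregm(t)$, boundary effects, non-vanishing decrease rate), but you do not supply the missing analysis---so the proof is incomplete.

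The paper sidesteps this entirely by splitting into two cases. If $E\in\cP$, cite \cite{zhou2016}. If $E\in\cregm\setminus\cP$, observe (using your Minkowski-sum description $\overline{\cregm}=\cP\oplus\cC(0,\rc)$, or directly) that $E$ must lie in some two-pursuer region $\creg_{ij}$ for an edge $\overline{P_iP_j}$ of $\cP$; then Proposition~\ref{prop:2v1kind} and Proposition~\ref{prop:opt_strategies_2vs1} give a winning strategy immediately. This two-line reduction replaces the whole ``crux'' paragraph and requires no new Lyapunov machinery.
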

\begin{proof}
	Let us prove that capture occurs if $E\in\cregm$. It is known that if the evader lies inside the pursuer convex hull $\cP$, there exist pursuit strategies which guarantee capture in finite time, as for instance the Voronoi-based strategy proposed in \cite{zhou2016}. So, it remains to prove that the evader can be captured also when $E\in\cregm\backslash\cP$. It is easy to note that if $E\in\cregm\backslash\cP$ there exist at least two pursuers, say $P_i,P_j$, such that $E\in\creg_{ij}$. Then, the evader will be captured if $P_i,P_j$ play $\psi^{ij}$.\\
	Let us now consider the case $E\notin\cregm$. Since $\cM$ is convex there exists a hyperplane which separates $E$ from $\cregm$. Thus, the evader can escape by going straight orthogonally to such hyperplane, in the opposite direction with respect to $\cregm$ (an example is shown in Fig.~\ref{fig:MP_capture_region}).
	In fact, if $P_i,P_j$ is the pair of pursuers which is closest to the separating hyperplane, the evader will be able to escape from them according to Proposition~\ref{prop:2v1kind}. The other pursuers will neither be able to capture the evader, being further from it than $P_i,P_j$.
\end{proof}

\begin{figure}[h]
	\centering
	\includegraphics[width=0.55\columnwidth]{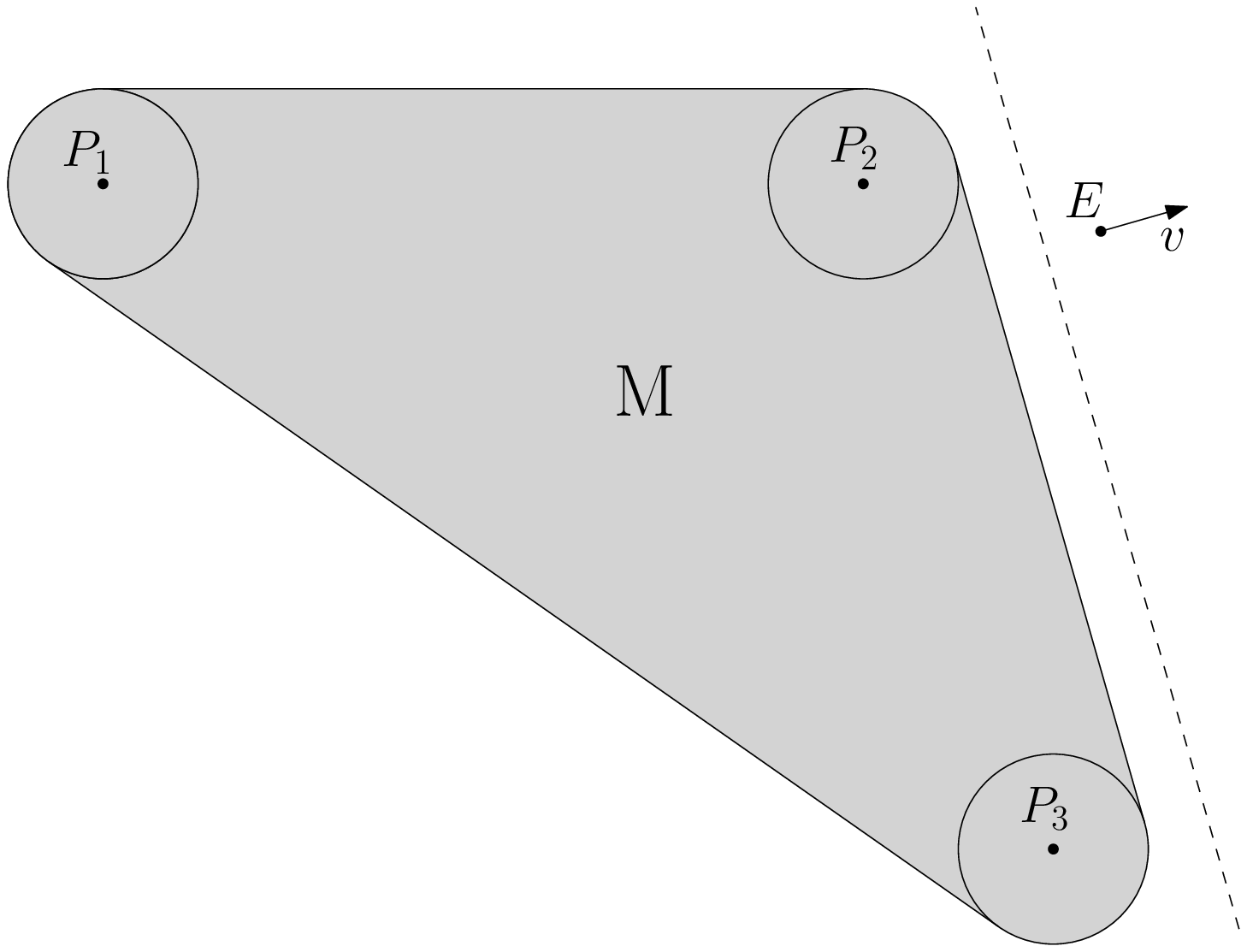}
	\caption{Example of multi-pursuer capture region $\cregm$ for $p=3$ (gray). Since $E\notin\cregm$, the evader can avoid capture by moving along the direction $v$.}
	\label{fig:MP_capture_region}
\end{figure}

The next theorem reports conditions under which the \mpg reduces to the \tg, in the sense that only two pursuers can provide capture in minimum time, irrespectively of the strategy played by the other pursuers.

\begin{theorem}\label{th:P_far1}
	Assume that at time $t$, $E\in\creg_{ij}$ for some $i,j\in\{1,2,\ldots,p\}$, and
	\begin{equation}\label{eq:cond2vs1}
		\|H_{ij}-P_k\|\ge T_{ij}+\rc,\quad k=1,\ldots,p.
	\end{equation}
	Then, from time $t$ onwards, the optimal \mpg strategies for $P_i,P_j,E$ are $\psi^{ij}$ and $\phi^{ij}$, respectively, and capture occurs at time $t+T_{ij}$. The strategies adopted by the other pursuers are irrelevant to the game duration.
\end{theorem}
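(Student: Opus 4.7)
The plan is to verify that the pair $(\psi^{ij}, \phi^{ij})$—extended by arbitrary motions for the pursuers $P_k$ with $k\notin\{i,j\}$—constitutes a saddle point for problem \eqref{eq:mintime} on $[t,\,t+T_{ij}]$, with value $T_{ij}$. The upper bound $T\le t+T_{ij}$ is immediate from Proposition~\ref{prop:opt_strategies_2vs1}: if $P_i$ and $P_j$ play $\psi^{ij}$, they alone guarantee capture at $H_{ij}$ by time $t+T_{ij}$, regardless of what the evader and the remaining pursuers do.

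For the lower bound, I would let $E$ play $\phi^{ij}$, i.e., travel in a straight line from $E(t)$ toward $H_{ij}$ at unit speed, and show that no pursuer can achieve capture before $t+T_{ij}$. For $k\in\{i,j\}$ this is again the saddle-point property of $\phi^{ij}$ in Proposition~\ref{prop:opt_strategies_2vs1}. For $k\notin\{i,j\}$, since $P_k$ can travel at most distance $\tau$ in time $\tau$, the reverse triangle inequality reduces the claim to
\[
\|P_k(t) - E(t+\tau)\| \ge \tau + \rc, \qquad \tau \in [0,\,T_{ij}].
\]

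The key step is the algebraic observation that, writing $E(t+\tau)=E(t)+\tau\,v$ with the unit vector $v=(H_{ij}-E(t))/T_{ij}$, a direct expansion gives
\[
\|P_k(t)-E(t+\tau)\|^2 - (\tau+\rc)^2 \;=\; \bigl(\|P_k(t)-E(t)\|^2-\rc^2\bigr) \;-\; 2\tau\bigl(v^{\top}(P_k(t)-E(t))+\rc\bigr),
\]
so that the quadratic terms in $\tau$ cancel and the quantity becomes \emph{linear} in $\tau$. A linear function is nonnegative on an interval iff it is nonnegative at both endpoints: at $\tau=0$ this reduces to $\|P_k(t)-E(t)\|\ge \rc$, which holds since no capture has yet occurred, while at $\tau=T_{ij}$ it is exactly the hypothesis $\|P_k(t)-H_{ij}\|\ge T_{ij}+\rc$.

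I expect this cancellation to be the main point of the argument. Without it, the hypothesis—which constrains distances to the single point $H_{ij}$ only—would not obviously preclude $P_k$ from intercepting the evader's trajectory at some intermediate time, since the distance function $\tau\mapsto\|P_k(t)-E(t+\tau)\|-\tau$ is in general not monotone. The linearity in $\tau$ is precisely what propagates a condition at the terminal time uniformly over $[0,\,T_{ij}]$, yielding the saddle-point value $T_{ij}$ and rendering the other pursuers' strategies inconsequential to the game duration.
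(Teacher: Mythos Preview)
Your argument is correct. The saddle-point framing and the reduction via the reverse triangle inequality to $\|P_k(t)-E(t+\tau)\|\ge \tau+\rc$ are sound, and the linearity observation is valid: the $\tau^2$ terms indeed cancel, so nonnegativity on $[0,T_{ij}]$ follows from the two endpoint checks you state.

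The paper's proof reaches the same conclusion by a different, more geometric device. Instead of expanding squares, it pivots at $H_{ij}$: since the evader heads straight to $H_{ij}$, one has $\cC(E(\tau),\rc)\subset\cC(H_{ij},\,T_{ij}+\rc-\tau)$ for every $\tau\in[0,T_{ij}]$; and since $P_k$ starts outside $\cC(H_{ij},\,T_{ij}+\rc)$ and moves at speed at most $1$, it can never enter the shrinking circle $\cC(H_{ij},\,T_{ij}+\rc-\tau)$, hence never enter $\cC(E(\tau),\rc)$. The equality case $\|H_{ij}-P_k\|=T_{ij}+\rc$ is then treated separately. Your algebraic route avoids that case split (the endpoint at $\tau=0$ is strictly positive because no capture has occurred, so the linear function can vanish only at $\tau=T_{ij}$), while the paper's shrinking-circle picture makes the geometry transparent and explains why the single hypothesis at $H_{ij}$ propagates to all intermediate times. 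The two arguments are equivalent in content; yours trades the picture for a clean endpoint interpolation.
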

\begin{proof}
	W.l.o.g. assume $t=0$. If $P_i,P_j$ and $E$ play their optimal strategies for the \tg, capture will occur at the location $H_{ij}$ at time $T_{ij}$. So, $\|H_{ij}-P_i\|=\|H_{ij}-P_j\|=T_{ij}+\rc$.
	Let $P_k$ denote a generic pursuer not involved in the considered \tg, i.e., $k\neq i,j$.	
	We prove that, if condition \eqref{eq:cond2vs1} holds, any strategy adopted by $P_k$ cannot reduce the capture time below $T_{ij}$. 	
	This is equivalent to show that $P_k$ cannot capture the evader during its path to $H_{ij}$.	
	According to Proposition~\ref{prop:opt_strategies_2vs1}, let $v_E=\frac{H_{ij}-E(0)}{\|H_{ij}-E(0)\|}$ denote the velocity vector of the evader. Since $E$ travels towards $H_{ij}$ following a straight path, one has $E(\tau)=E(0)+\tau v_E$, $0\le\tau\le T_{ij}$.	
	Since $\|H_{ij}-E(0)\|=T_{ij}$, then $\cC(E(0),\rc)\subset\cC(H_{ij},T_{ij}+\rc)$ and $\cC(E(\tau),\rc)\subset\cC(H_{ij},T_{ij}+\rc-\tau),$ for all $\tau$ such that $0\le\tau\le T_{ij}$ (see Fig.~\ref{fig:p_farther}).\\
	Consider the case $\|H_{ij}-P_k(0)\|> T_{ij}+\rc$, where strict inequality is assumed in \eqref{eq:cond2vs1}.	
	It follows that $P_k(0)\notin \cC(H_{ij},T_{ij}+\rc)$ and then for any possible trajectory of $P_k$ one has $P_k(\tau)\notin \cC(H_{ij},T_{ij}+\rc-\tau), ~0\le\tau\le T_{ij}$. 	
	Hence, $P_k(\tau)\notin\cC(E(\tau),\rc)$, for all $0\le\tau\le T_{ij}$ and then $P_k$ cannot capture $E$ before $T_{ij}$.\\
	It remains to consider the case when condition \eqref{eq:cond2vs1} holds with equality, i.e., 	
	$\|H_{ij}-P_k\|= T_{ij}+\rc$. Let $Q=E(0)-\rc v_E$. To avoid capture at time $0$, it must be $P_k(0)\neq Q$.
	By going straight towards $H_{ij}$, the $k$-th pursuer will always lie on the border of $\cC(H_{ij},T_{ij}+\rc-\tau)$, thus leading again to $P_k(\tau)\notin\cC(E(\tau),\rc)$ for $0\le\tau<T_{ij}$. When $\tau=T_{ij}$ one will have $\|P_k(T_{ij})-H_{ij}\|=\rc$, meaning that $P_i$, $P_j$ and $P_k$ will capture the evader simultaneously at time $T_{ij}$.
	\begin{figure}[ht]
		\centering
		\includegraphics[width=0.8\columnwidth]{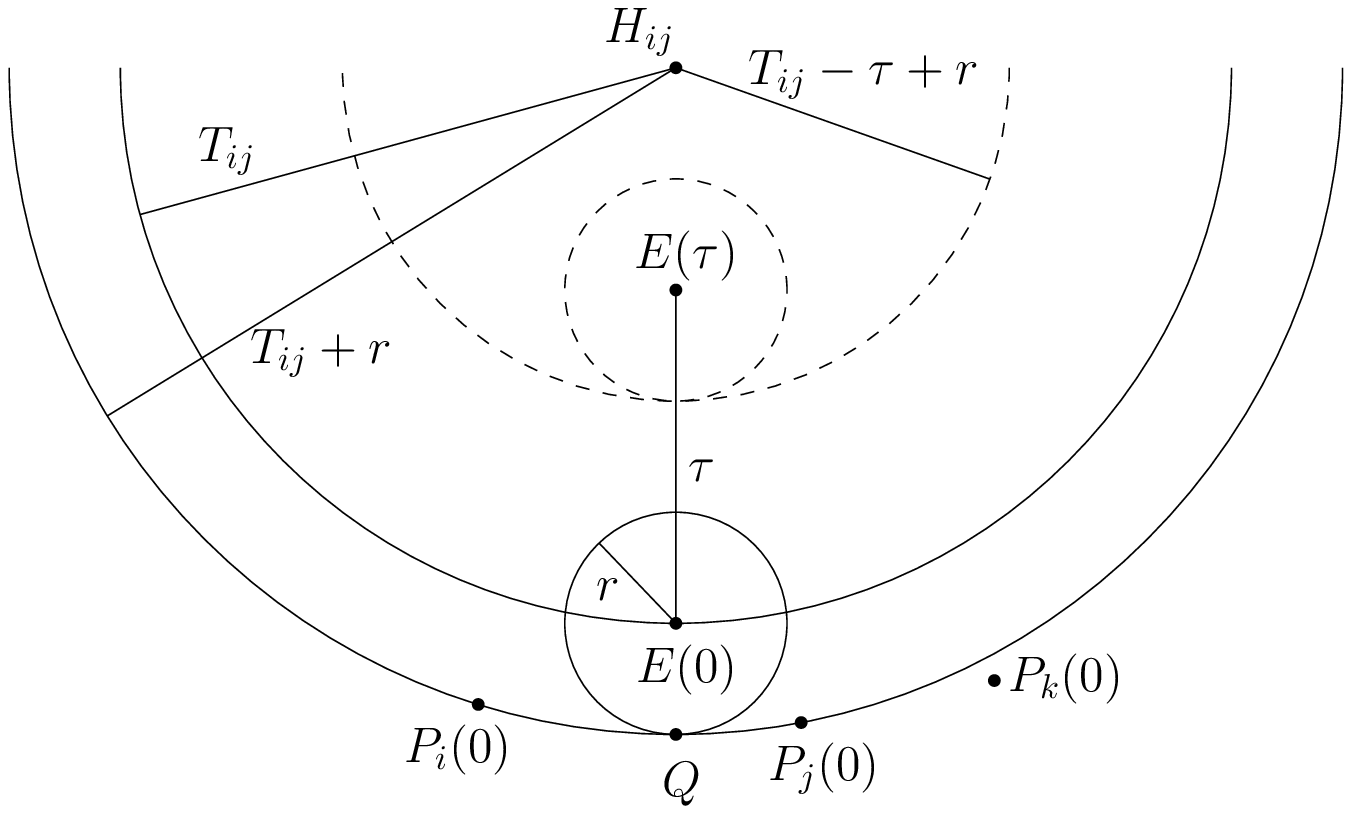}
		\caption{Sketch of the proof of Theorem~\ref{th:P_far1}.}
		\label{fig:p_farther}
	\end{figure}
\end{proof}

Notice that, if $P_i$, $P_j$ and $E$ play their optimal \tg strategies, one has $\|H_{ij}-P_i\|=\|H_{ij}-P_j\|=T_{ij}+\rc$ and so \eqref{eq:cond2vs1} can be rewritten as
\begin{equation}\label{eq:cond2vs1b}
	\|H_{ij}-P_k\|\ge\|H_{ij}-P_i\|,\quad k=1,\ldots,p.
\end{equation}
Such a condition states that if there is no pursuer closer to $H_{ij}$ than $P_i$ (and $P_j$), then \eqref{eq:cond2vs1} holds and the optimal \mpg strategies for $P_i,P_j,E$ are $\psi^{ij}$ and $\phi^{ij}$. This means that the other pursuers cannot improve the capture time even if they go straight to $H_{ij}$. As a direct consequence of Theorem~\ref{th:P_far1}, if there exist two distinct pairs of pursuers $(P_i,P_j)$ and $(P_h,P_l)$ satisfying the conditions of Theorem~\ref{th:P_far1}, then $T_{ij}=T_{hl}$.

Theorem~\ref{th:P_far1} states that if condition \eqref{eq:cond2vs1} or \eqref{eq:cond2vs1b} is satisfied by a pair of pursuers $P_i,P_j$ at a certain time $t$, the strategies of these two pursuers must switch to $\psi^{ij}$ in order to be optimal from time $t$ onwards, irrespectively of what the other pursuers will do. Similarly, the evader must switch to strategy $\phi^{ij}$ in order to maximize its survival time. 
This fact will be exploited in Section \ref{sec:strategies} to devise new families of winning pursuit strategies for the \mpg.
The next result provides a further insight on the way pursuers cooperate to capture the evader after condition \eqref{eq:cond2vs1} has occurred.
\begin{theorem}\label{th:centroid}
	Assume that condition \eqref{eq:cond2vs1} occurs at time $t>0$, and let $P_i,P_j$ denote the pursuers related to the corresponding \tg. If $E(t)\notin \overline{P_i(t) P_j(t)}$ then $\exists k\neq i,j$ such that $H_{ij}(t)$ is the centroid of $P_i(t),P_j(t),P_k(t)$.
\end{theorem}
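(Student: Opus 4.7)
The plan is to combine a continuity argument with a geometric analysis in the canonical reference frame of Section~\ref{subsec:recall_2P1E}. First I would extract a ``critical'' third pursuer: since the positions $P_k(\cdot)$ and $E(\cdot)$ are continuous (bounded velocities) and the maps $H_{ij}(\cdot)$, $T_{ij}(\cdot)$ are continuous functions of the state through \eqref{eq:function_fp}--\eqref{eq:solution_2vs1}, the phrasing ``condition \eqref{eq:cond2vs1} occurs at time $t>0$'' encodes a first-activation event: \eqref{eq:cond2vs1} fails on every left neighborhood of $t$ and holds at $t$. Were the strict inequality $\|H_{ij}(t)-P_k(t)\|>T_{ij}(t)+\rc$ to hold for every $k\neq i,j$, by continuity the same would hold on a full neighborhood of $t$, contradicting the activation hypothesis. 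Therefore there exists $k\neq i,j$ with $\|H_{ij}(t)-P_k(t)\|=T_{ij}(t)+\rc$.

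Next I would place the reference frame so that $P_i(t)=[-d,0]'$, $P_j(t)=[d,0]'$, with the midpoint of $\overline{P_i(t)P_j(t)}$ at the origin. Since $E(t)\notin\overline{P_i(t)P_j(t)}$, the evader is $E(t)=[x,y]'$ with $y\neq 0$, and WLOG $y>0$. Then \eqref{eq:solution_2vs1} gives $H_{ij}(t)=[0,h]'$ with $h>0$ (same sign as $y$), and \eqref{eq:point_Hp} gives $\sqrt{d^2+h^2}=T_{ij}+\rc$. In these coordinates, the target identity $H_{ij}=\tfrac{1}{3}(P_i+P_j+P_k)$ reduces to the single algebraic statement $P_k(t)=[0,3h]'$, which in particular forces $2h=\sqrt{d^2+h^2}$, i.e.\ $h=d/\sqrt{3}$, so the three pursuers form an equilateral triangle with centroid $H_{ij}$.

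The core step, and the main obstacle, is pinning $P_k(t)$ at exactly $[0,3h]'$ rather than somewhere else on the circle $\partial\cC(H_{ij},\sqrt{d^2+h^2})$. I would exploit the first-activation property via a first-order condition. Setting $f(\tau)=(T_{ij}(\tau)+\rc)^2-\|H_{ij}(\tau)-P_k(\tau)\|^2$, one has $f(\tau)>0$ on $(t-\delta,t)$ and $f(t)=0$, so the left-derivative $\dot f(t^-)\le 0$. Expanding $\dot f(t)$ and using the unit-speed bound on $\dot P_k$ together with the dependence of $\dot H_{ij}$ and $\dot T_{ij}$ on $\dot P_i,\dot P_j,\dot E$ through \eqref{eq:function_fp}--\eqref{eq:solution_2vs1}, the sign constraint should collapse to a directional condition that forces $P_k(t)-H_{ij}(t)$ to point along the axis from the midpoint of $\overline{P_iP_j}$ through $H_{ij}$, on the far side, at distance exactly $2h$. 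The delicate part is executing the differentiation cleanly, because $H_{ij}$ and $T_{ij}$ depend on $E, P_i, P_j$ simultaneously via the nontrivial expressions \eqref{eq:function_fp} and \eqref{eq:solution_2vs1}, and then showing that no admissible velocity assignment respecting the unit-speed bound on all four agents is compatible with $P_k(t)-H_{ij}(t)$ being misaligned.
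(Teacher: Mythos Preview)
Your first paragraph is exactly the paper's proof, and it is already complete. The paper argues precisely as you do: by the first-activation hypothesis (condition \eqref{eq:cond2vs1} fails on every left neighborhood of $t$ and holds at $t$) together with continuity of $H_{ij}(\cdot)$, $T_{ij}(\cdot)$ and the trajectories, some $k\neq i,j$ must satisfy the equality $\|H_{ij}(t)-P_k(t)\|=T_{ij}(t)+\rc=\|H_{ij}(t)-P_i(t)\|=\|H_{ij}(t)-P_j(t)\|$. From this triple equidistance the paper concludes that $H_{ij}(t)$ is the ``centroid'' of $P_i(t),P_j(t),P_k(t)$.

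The gap in your proposal is a terminology mismatch, not a mathematical one: the paper is using ``centroid'' to mean the point equidistant from the three pursuers (i.e., the circumcenter), not the barycenter $\tfrac{1}{3}(P_i+P_j+P_k)$. This is unambiguous from the paper's own proof, which terminates at the equidistance identity and declares it the centroid, and from the subsequent commentary (``three pursuers that will end up capturing the evader simultaneously by heading towards their centroid $H_{ij}(t)$''), which only makes sense under the equidistance reading. The later ``centroid strategy'' is likewise defined via the center of the minimum enclosing circle, not the barycenter.

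Consequently your Steps~2--3 attempt to prove a strictly stronger statement---that the three pursuers form an equilateral triangle with barycenter $H_{ij}$---which is neither claimed nor true in general. The first-order argument you sketch cannot succeed: the sign condition $\dot f(t^-)\le 0$ involves the (arbitrary) velocities $\dot P_i,\dot P_j,\dot P_k,\dot E$, and the theorem is asserted for all admissible agent motions, so no velocity-dependent constraint can pin $P_k(t)$ to the single point $[0,3h]'$. Drop those steps; your first paragraph is the whole proof.
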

\begin{proof}
	Since $E(t)\notin \overline{P_i(t) P_j(t)}$, $H_{ij}(t)$ is uniquely defined due to \eqref{eq:solution_2vs1}. By \eqref{eq:cond2vs1b} one has
	\begin{equation}\label{eq:ineq_th3}
		\begin{aligned}
			\|H_{ij}(t)-P_k(t)\|&\ge\|H_{ij}(t)-P_i(t)\|
			=\|H_{ij}(t)-P_j(t)\|, \forall k\neq i,j.
		\end{aligned}
	\end{equation}
	Let $\eps>0$ and set $\tau=t-\eps$. Being $\creg_{ij}$ an open set (see \eqref{eq:capture_region}), it is possible to choose $\eps$ sufficiently small so that $E(\tau)\in\creg_{ij}(\tau)$, with $H_{ij}(\tau)$ defined according to \eqref{eq:point_Hp}. On the other hand, since at time $\tau$ condition \eqref{eq:cond2vs1} does not hold, one has that there exists $k\neq i,j$ such that
	\begin{equation}\label{eq:ineq_th3_bis}
		\|H_{ij}(\tau)-P_k(\tau)\|<\|H_{ij}(\tau)-P_i(\tau)\|=\|H_{ij}(\tau)-P_j(\tau)\|.
	\end{equation}
	By the arbitrariness of $\eps$ and the continuity of the agents' trajectories, \eqref{eq:ineq_th3} and \eqref{eq:ineq_th3_bis} imply that
	\begin{equation*}\label{eq:capture_by_3}
		\|H_{ij}(t)-P_k(t)\|=\|H_{ij}(t)-P_i(t)\|=\|H_{ij}(t)-P_j(t)\|
	\end{equation*}
	and hence $H_{ij}(t)$ is the centroid of $P_i(t),P_j(t),P_k(t)$.
\end{proof}

As a consequence of Theorem~\ref{th:centroid}, if the evader is not aligned with pursuers $P_i$ and $P_j$ when condition \eqref{eq:cond2vs1} is verified, there are at least three pursuers that will end up capturing the evader simultaneously by heading towards their centroid $H_{ij}(t)$.

\section{Switching pursuit strategies}\label{sec:strategies}

In this section, the concept of \emph{switching pursuit strategy} is introduced. The aim of such strategies is to guarantee that condition \eqref{eq:cond2vs1} holds at some finite time, and hence capture is assured by switching to the appropriate \tg strategies. 

\begin{definition}
	A pursuit strategy is a \emph{potential switching strategy} if there exists a pair of pursuers $P_i,P_j$ for which condition~\eqref{eq:cond2vs1} holds at some finite time, for any possible strategy of the evader. A potential switching strategy is referred to as \emph{switching strategy} if $P_i,P_j$ play the optimal \tg strategy as soon as condition~\eqref{eq:cond2vs1} holds.
\end{definition}

By the previous definition, a switching strategy leads to the evader capture in finite time, and so it is a winning strategy. In fact, from the switching time onwards the pursuers will play the optimal \tg strategy, guaranteeing the evader capture.

In general, since condition \eqref{eq:cond2vs1} depends on the evader position, it is not easy to design pursuit strategies that eventually lead to its satisfaction.
Therefore, it is useful to derive sufficient conditions under which \eqref{eq:cond2vs1} holds, that the pursuers can enforce whatever is the strategy played by the evader. Hereafter, two such conditions are presented.

Let us define the union of all the 2-pursuer capture regions as
\begin{equation}\label{eq:Dhat}
	\widehat\creg=\bigcup_{i,j\in\{1,\ldots,p\}}\creg_{ij}.
\end{equation}
An example of set $\widehat\creg$ is shown in Fig.~\ref{fig:union_capture_region}-(a), while the corresponding set $\widehat\creg\backslash\op{\cP}$ is depicted in Fig.~\ref{fig:union_capture_region}-(b). The next theorem provides a sufficient condition for switching to \tg. 

\begin{figure}[t]
	\centering
	\includegraphics[width=0.48\columnwidth]{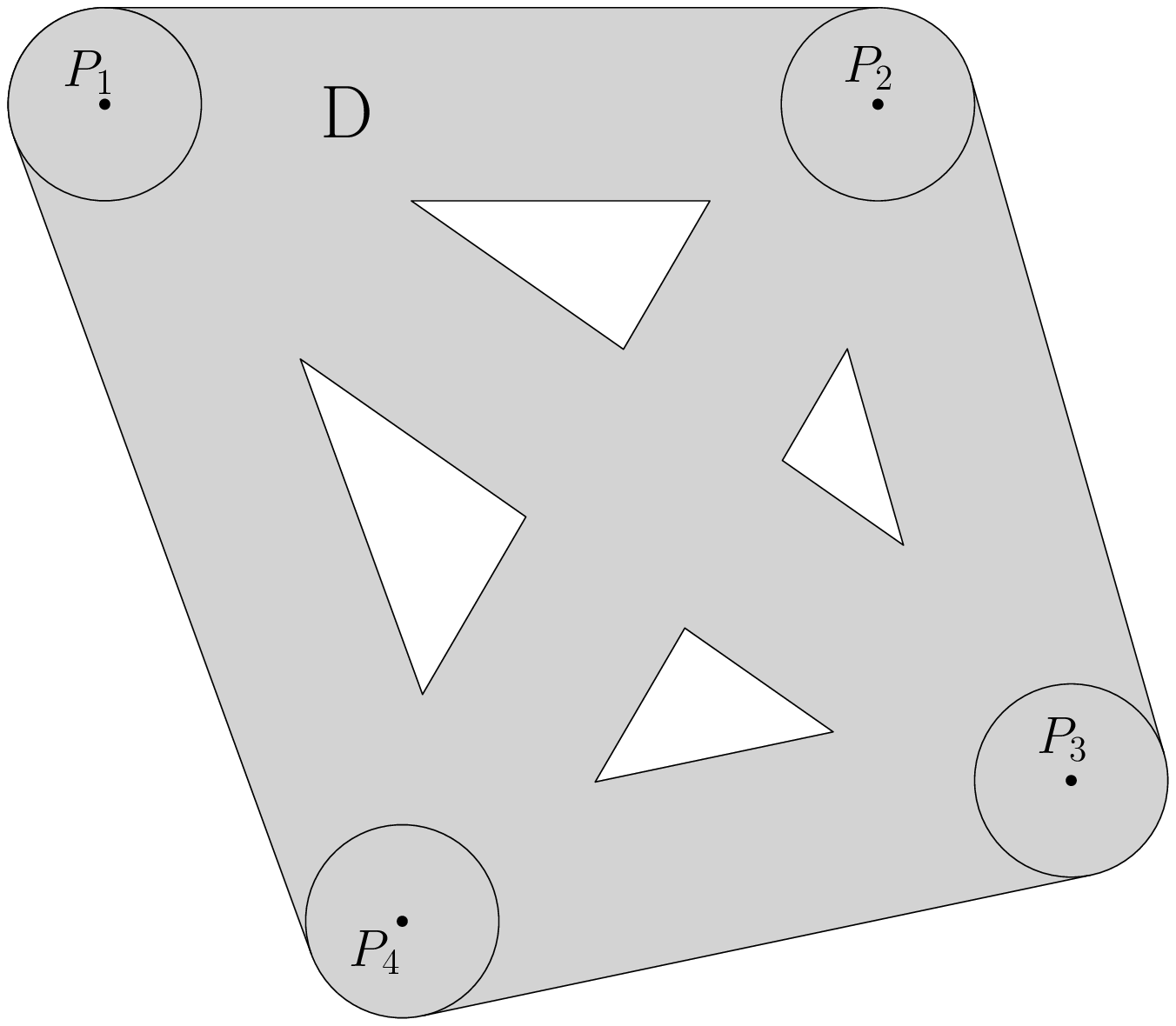}~~
	\includegraphics[width=0.48\columnwidth]{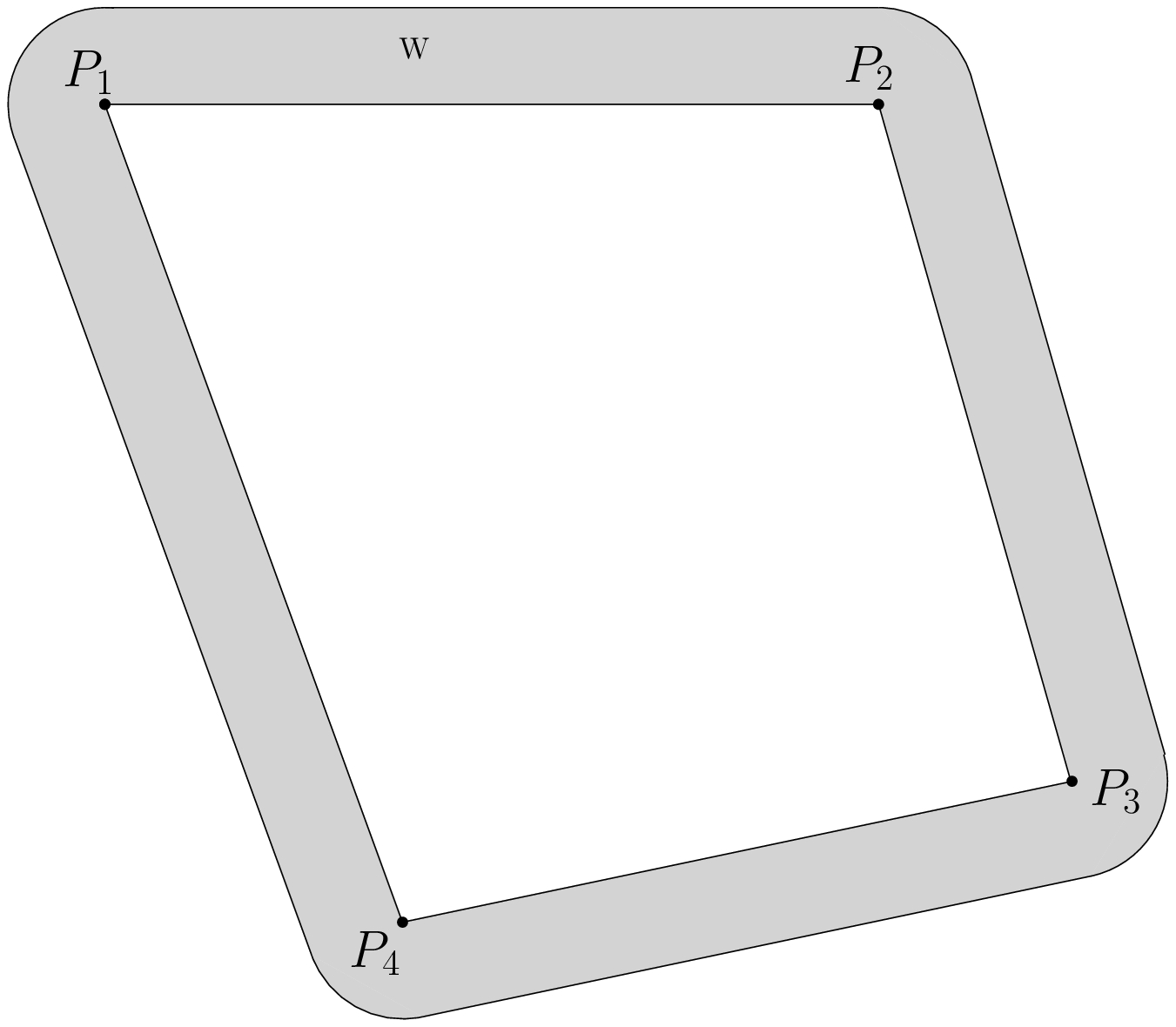}\\
	(a)\hspace*{3.4cm}(b)
	\caption{(a) Set $\widehat D$ denoting the union of all the 2-pursuer capture regions; (b) set $\widehat\creg\backslash\op{\cP}$.}
	\label{fig:union_capture_region}
\end{figure}

\begin{theorem}\label{th:capture_segment}
	If $E\in\widehat\creg\backslash\op{\cP}$, then there exists a pair of pursuers $P_i,P_j$ for which condition \eqref{eq:cond2vs1} holds.
\end{theorem}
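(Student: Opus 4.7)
The plan is to identify the required pair via an extremal characterization of the multi-pursuer game. I will introduce the evader's safe region, show it is compact and convex, and then obtain the pair $(i,j)$ as the two ``active'' pursuers at the farthest point of that region from $E$.

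Concretely, define for each pursuer $P_k$ the set
\[
\overline{A}_k = \{Q \in \rr^2 : \|Q - E\| \leq \|Q - P_k\| - \rc\},
\]
i.e., the points the evader can reach no later than $P_k$ can capture there. Each $\overline{A}_k$ is the closed convex side of a branch of the hyperbola with foci $E$ and $P_k$ and parameter $\rc$, so the intersection $\overline{S} = \bigcap_{k=1}^{p}\overline{A}_k$ is closed and convex. Since $\widehat{\creg}\subseteq\cregm$, the hypothesis $E\in\widehat{\creg}$ together with Theorem~\ref{th:game_of_kind} implies that the multi-pursuer game captures $E$ in finite time, which is equivalent to $\overline{S}$ being bounded; hence $\overline{S}$ is compact.

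Let $Q^{*}$ maximize the convex function $Q\mapsto\|Q-E\|$ over $\overline{S}$, and set $T^{*}=\|Q^{*}-E\|$. Since a convex function attains its maximum on a compact convex set at an extreme point, and the boundary of $\overline{S}$ is a union of arcs of the curves $\alpha_k=\partial\overline{A}_k$, $Q^{*}$ lies at an intersection $\alpha_i\cap\alpha_j$ for some pair $(i,j)$. Directly from $Q^{*}\in\overline{S}$ and $Q^{*}\in\alpha_i\cap\alpha_j$,
\[
\|Q^{*}-P_i\|=\|Q^{*}-P_j\|=T^{*}+\rc,\qquad \|Q^{*}-P_k\|\geq T^{*}+\rc\ \text{for all }k,
\]
which is exactly condition~\eqref{eq:cond2vs1} once the abstract pair $(Q^{*},T^{*})$ is matched with the 2-pursuer capture data $(H_{ij},T_{ij})$ of \eqref{eq:point_Hp}--\eqref{eq:solution_2vs1}.

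The hypothesis $E\notin\op{\cP}$ enters precisely in this identification. It must be shown both that $E\in\creg_{ij}$ (so that $H_{ij}$ is well-defined) and that $Q^{*}$ is the specific Apollonius intersection selected by \eqref{eq:solution_2vs1}, namely the one on the same side of the line $P_iP_j$ as $E$ with $|h_{ij}|>|y|$ in the local frame. When $E\in\op{\cP}$, by Carath\'eodory's theorem $E$ lies in the open interior of the triangle of some three pursuers, which is exactly the regime in which $Q^{*}$ can coincide with the ``wrong'' Apollonius root, or with a triple intersection $\alpha_i\cap\alpha_j\cap\alpha_k$ corresponding to a simultaneous three-pursuer capture at the centroid (compare Theorem~\ref{th:centroid}); in either case no pair satisfies~\eqref{eq:cond2vs1}. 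Ruling this out via $E\notin\op{\cP}$ is what pins $Q^{*}$ to a genuine $H_{ij}$, and the conclusion of the theorem follows at once.

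The hard part I expect is making this last identification fully rigorous. My plan is to distinguish the subcases where $E$ lies strictly outside $\cP$, at a vertex of $\cP$, or on an open edge of $\partial\cP$, and in each one to use a supporting line of $\cP$ through the closest boundary point to $E$ to constrain the geometry of $\overline{S}$ and force the extremal $Q^{*}$ onto the Apollonius root prescribed by \eqref{eq:solution_2vs1}.
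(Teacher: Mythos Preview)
Your safe-region approach is genuinely different from the paper's and is a natural line of attack, but as written it has one technical slip and one real gap.

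\textbf{The slip.} The sentence ``a convex function attains its maximum at an extreme point, hence $Q^{*}\in\alpha_i\cap\alpha_j$'' does not work: each $\alpha_k$ is a strictly convex hyperbola branch, so every boundary point of $\overline{S}$ lying on a single arc is already an extreme point of $\overline{S}$. The correct reason two constraints must be active at $Q^{*}$ is first-order: along any single arc $\alpha_k$ the quantity $\|Q-E\|=\|Q-P_k\|-\rc$ is unbounded in both directions away from the vertex, so a maximizer with only one active constraint could be improved.

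\textbf{The gap.} Everything hinges on the identification $Q^{*}=H_{ij}$, $T^{*}=T_{ij}$, and you explicitly leave this open. Without it you only have $\|Q^{*}-P_k\|\ge T^{*}+\rc$ with $T^{*}\le T_{ij}$ (since $\overline{S}\subseteq\overline{A}_i\cap\overline{A}_j$), which is strictly weaker than~\eqref{eq:cond2vs1}. Your sketched plan --- a case split on the position of $E$ relative to $\partial\cP$ and a supporting-line argument --- may well succeed, but it is precisely the substantive geometric step of the theorem, and nothing in the proposal carries it out. (A side remark: a triple intersection $\alpha_i\cap\alpha_j\cap\alpha_k$ does \emph{not} by itself violate~\eqref{eq:cond2vs1}, since equality is allowed there; the obstruction is still whether $Q^{*}$ is the ``far'' root of $\alpha_i\cap\alpha_j$.)

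\textbf{How the paper does it.} The paper selects the same pair but from the other side: take $(i,j)$ minimizing $T_{ij}$ over all pairs with $E\in\cD_{ij}$, and argue by contradiction. If some $P_z$ satisfies $\|P_z-H_{ij}\|<T_{ij}+\rc$, the hypothesis $E\notin\op{\cP}$ confines $P_z$ to the wedge of $\cC(H_{ij},T_{ij}+\rc)$ cut off by the lines $P_iE$ and $P_jE$. A short geometric lemma then forces $E\in\cD_{iz}$ or $E\in\cD_{jz}$, and a monotonicity lemma (the \tg capture time increases when one pursuer slides outward along the line $P_iP_j$) yields $T_{jz}<T_{ij}$, contradicting minimality. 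This route never needs to match $Q^{*}$ with a specific Apollonius root, because it works directly with $H_{ij}$ and $T_{ij}$.

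The two selections agree: once your identification holds, $T^{*}=T_{ij}\le T_{hl}$ for every admissible pair, so your active pair is exactly the paper's minimizing pair. In that sense your program is a reformulation whose missing lemma is equivalent in difficulty to the paper's two lemmas combined.
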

\begin{proof}
	See Appendix.
\end{proof}

\begin{remark}\label{rem1}
	An alternative formulation of Theorem~\ref{th:capture_segment} states that if $E\in\cregm \backslash\op{\cP}$, there exists a pair of pursuers for which condition \eqref{eq:cond2vs1} holds.	
	In fact, from the definitions \eqref{eq:capture_region_multi}  and \eqref{eq:Dhat}, it can be observed that  $\widehat\creg \subseteq \cregm$. Hence, $\widehat\creg\backslash\op{\cP}  \subseteq \cregm \backslash\op{\cP}$. On the other hand, if $Q \in  \cregm \backslash\op{\cP}$, there exist two pursuers $P_i$, $P_j$, such that $Q \in \creg_{ij}$, which implies $Q \in \widehat\creg\backslash\op{\cP}$. Therefore $\widehat\creg\backslash\op{\cP} = \cregm \backslash\op{\cP}$.
\end{remark}

Notice that Theorem~\ref{th:capture_segment} guarantees the satisfaction of \eqref{eq:cond2vs1} if $E\in\partial\cP$.
Moreover, Theorem~\ref{th:capture_segment} suggests that if the pursuers play a strategy which steadily reduces the size (in some sense) of $\cP$ over time, then \eqref{eq:cond2vs1} is eventually assured. The following result guarantees the satisfaction of \eqref{eq:cond2vs1} whenever the pursuers are sufficiently close to each other.

%

\begin{theorem}\label{th:winning_strategies}
	Let us consider a pursuit strategy such that
	\begin{equation}\label{eq:winning}
		\max_{i,j=1,\ldots,p} \|P_i(\bar t)-P_j(\bar t)\|\le \sqrt{3}\rc
	\end{equation}
	at a certain time $\bar t>0$. Then, there exist a time $\hat t<\bar t$ and a pair of pursuers $P_i,P_j$ for which condition \eqref{eq:cond2vs1} holds at $\hat t$. Hence, the considered pursuit strategy is a potential switching strategy.
\end{theorem}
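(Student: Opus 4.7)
My plan is to combine Jung's theorem with Theorem~\ref{th:capture_segment} and a continuity argument. First, I would invoke Jung's theorem in $\rr^2$: the hypothesis on the diameter, $\max_{i,j}\|P_i(\bar t)-P_j(\bar t)\|\le\sqrt{3}\rc$, forces the set $\{P_1(\bar t),\dots,P_p(\bar t)\}$ into a closed disk $\cC(C,\rc)$ of radius $\rc$, so $\cP(\bar t)\subseteq\cC(C,\rc)$. A key geometric fact, tight for the equilateral-triangle configuration that saturates the Jung bound, is that every point in $\cP(\bar t)$ lies within distance $\rc$ of at least one pursuer; hence $\op{\cP(\bar t)}\subseteq\bigcup_k\cC(P_k(\bar t),\rc)$. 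Since the evader is uncaptured, $\|E(\bar t)-P_k(\bar t)\|>\rc$ for every $k$, so $E(\bar t)\notin\op{\cP(\bar t)}$.

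The argument then splits on the position of $E(\bar t)$ with respect to $\cregm(\bar t)$. If $E(\bar t)\in\cregm(\bar t)$, the previous step yields $E(\bar t)\in\cregm(\bar t)\setminus\op{\cP(\bar t)}$, which by Remark~\ref{rem1} equals $\widehat\creg(\bar t)\setminus\op{\cP(\bar t)}$; Theorem~\ref{th:capture_segment} then produces a pair $(i,j)$ satisfying condition \eqref{eq:cond2vs1} at $\bar t$. If instead $E(\bar t)\notin\cregm(\bar t)$, a continuity argument on the trajectories shows that, as the pursuers gather, at some earlier time $\hat t$ the evader must have lain on the boundary $\partial\cregm(\hat t)$; at such $\hat t$, the above reasoning applies again and yields a pair satisfying \eqref{eq:cond2vs1}.

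Finally, to upgrade the condition from holding at $\bar t$ to holding strictly before $\bar t$, I would invoke continuity of $P_k(t)$, $E(t)$, $T_{ij}(t)$ and $H_{ij}(t)$: when the inequalities in \eqref{eq:cond2vs1} for $k\neq i,j$ are strict, they persist on a left-neighborhood, yielding the desired $\hat t<\bar t$. The degenerate case in which equality holds for some non-paired pursuer would be handled via Theorem~\ref{th:centroid}, which forces a centroid configuration at the switching instant and permits either a backward perturbation in time or a swap to a nearby pair. The hardest part, I anticipate, is the case $E(\bar t)\notin\cregm(\bar t)$: here the continuity argument must carefully exploit the Jung bound to guarantee that the evader's transition through the annulus $\cregm(t)\setminus\op{\cP(t)}$ takes place strictly before $\bar t$, producing a valid switching instant.
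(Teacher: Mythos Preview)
Your overall plan---prove the covering $\cP(\bar t)\subseteq\bigcup_k\cC(P_k(\bar t),\rc)$, deduce $E(\bar t)\notin\op{\cP(\bar t)}$, and then reach Theorem~\ref{th:capture_segment} through a continuity argument---is exactly the paper's route. Two remarks. First, Jung's theorem only gives $\cP(\bar t)\subseteq\cC(C,\rc)$ for some center $C$; that is \emph{not} the covering $\cP(\bar t)\subseteq\bigcup_k\cC(P_k(\bar t),\rc)$ you need, which you then assert separately. The paper obtains this covering directly, by triangulating $\cP(\bar t)$ and noting that any triangle with all sides $\le\sqrt{3}\rc$ is covered by the radius-$\rc$ disks at its vertices (the equilateral case is extremal). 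So Jung is a detour rather than a shortcut here.

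Second, and more importantly, your case split on $E(\bar t)$ creates a genuine gap. In your Case~2 ($E(\bar t)\notin\cregm(\bar t)$) you pick $\hat t$ with $E(\hat t)\in\partial\cregm(\hat t)$ and say ``the above reasoning applies again.'' It does not: $\cregm$ and $\widehat\creg$ are \emph{open} sets, so $E(\hat t)\in\partial\cregm(\hat t)$ means $E(\hat t)\notin\widehat\creg(\hat t)$ and Theorem~\ref{th:capture_segment} cannot be invoked at that instant. Your Case~1 likewise needs an ad~hoc backward perturbation (and the appeal to Theorem~\ref{th:centroid}) to force a strict $\hat t<\bar t$. The paper avoids both complications by splitting on $E(0)$ instead of $E(\bar t)$: if $E(0)\in\cregm(0)\setminus\op{\cP(0)}$, condition~\eqref{eq:cond2vs1} already holds at $t=0<\bar t$; otherwise $E(0)\in\op{\cP(0)}$, and since the covering forces $E(\bar t)\notin\cP(\bar t)$, continuity yields $\hat t<\bar t$ with $E(\hat t)\in\partial\cP(\hat t)$. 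Because $\partial\cP(\hat t)\subset\cregm(\hat t)$, this lands inside $\widehat\creg(\hat t)\setminus\op{\cP(\hat t)}$ and Theorem~\ref{th:capture_segment} applies directly---no perturbation and no centroid argument are needed.
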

\begin{proof}
	See Appendix.
\end{proof}


Notice that condition \eqref{eq:winning} in Theorem~\ref{th:winning_strategies} is much simpler than \eqref{eq:cond2vs1}, since it does not depend on the evader position. Therefore, it can be effectively employed in designing families of winning switching strategies, even starting from naive pursuit strategies that would not be successful without switching. A first family consists of pursuers $P_k$ going straight towards a fixed point $M\in\rr^2$, with speed $\|v_{P_k}(t)\|\ge\eps$, for some $0 < \eps \leq 1$, and stopping once they reach $M$. It is referred to as \textit{Fixed-Point Pursuit Strategy (FPPS)}. The following result holds.
\begin{theorem}\label{th:switching_strategies}
	Any FPPS is a potential switching strategy.
\end{theorem}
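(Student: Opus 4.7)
The plan is to reduce Theorem~8 directly to Theorem~\ref{th:winning_strategies}, exploiting the fact that any FPPS is \emph{open-loop} (the pursuer trajectories do not depend on the evader) and drives all pursuers to the common point $M$ within a uniform, a priori computable time.

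First, I would unpack the definition of FPPS: each $P_k$ moves along the straight segment from $P_k(0)$ to $M$ with speed at least $\eps>0$, and remains at $M$ upon arrival. Hence $P_k$ reaches $M$ no later than time $\|P_k(0)-M\|/\eps$. Setting
\[
\bar t \;=\; \frac{1}{\eps}\,\max_{k=1,\ldots,p}\|P_k(0)-M\|,
\]
we have $P_k(\bar t)=M$ for every $k$, so
\[
\max_{i,j=1,\ldots,p}\|P_i(\bar t)-P_j(\bar t)\|\;=\;0\;\le\;\sqrt{3}\,\rc,
\]
and the hypothesis of Theorem~\ref{th:winning_strategies} is satisfied at $\bar t$. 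Invoking that theorem yields a time $\hat t<\bar t$ and a pair $(P_i,P_j)$ for which condition~\eqref{eq:cond2vs1} holds at $\hat t$, which is precisely the definition of a potential switching strategy.

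The only point I would need to address carefully is that the previous argument must go through \emph{for any evader strategy}, as required by the definition of potential switching strategy. Since the pursuer trajectories under FPPS are predetermined and independent of $E$, the positions $P_k(\bar t)$ and hence the bound $\sqrt{3}\,\rc$ hold regardless of the evader; only the time $\hat t$ provided by Theorem~\ref{th:winning_strategies} depends on the evader's trajectory. I do not foresee a genuine technical obstruction: the core of the argument is an elementary upper bound on the arrival times at $M$, while all the geometric and game-theoretic content is already encapsulated in Theorem~\ref{th:winning_strategies}. If any subtlety arises, it would be the degenerate situation in which capture occurs before $\bar t$; this would be handled by observing that the pursuer motion is fully specified by FPPS for all $t\ge 0$, so Theorem~\ref{th:winning_strategies} can still be applied to the (open-loop) pursuer trajectories extended up to $\bar t$.
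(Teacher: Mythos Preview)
Your proof is correct and follows essentially the same approach as the paper: bound the time by which all pursuers are close enough together, then invoke Theorem~\ref{th:winning_strategies}. The paper takes $\tau=(d-\rc)/\eps$ with $d=\max_k\|P_k(0)-M\|$, at which point each pursuer is within $\rc$ of $M$ and hence pairwise distances are at most $2\rc$; your choice $\bar t=d/\eps$ (pairwise distance $0$) is, if anything, cleaner, since it verifies the $\sqrt{3}\,\rc$ hypothesis of Theorem~\ref{th:winning_strategies} immediately.
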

\begin{proof}
	Let 
	$$
	d=\max_{k\in\{1,\ldots,p\}}\|P_k(0)-M\|.
	$$	
	Then, at time $\tau=\frac{d-\rc}{\eps}$ one has $\|P_k(\tau)-M\|\le\rc$,~ $k=1,\ldots,p$, and hence $\|P_i(\tau)-P_j(\tau)\|\le\|P_i(\tau)-M\|+\|M-P_j(\tau)\|\le2\rc$, for any pair of pursuers $P_i,P_j$. Thus, the result follows by Theorem~\ref{th:winning_strategies}.
\end{proof}

Inspired by Theorem~\ref{th:centroid}, an example of FPPS is the \emph{centroid} strategy, in which all pursuers (or a subset of them, including the evader in their convex hull) move with the same velocity towards the center of the minimum-radius circle which includes them.

Another simple pursuit strategy is the so-called \emph{pure pursuit}. In this strategy, all pursuers always point towards the evader at maximum velocity, i.e.
\begin{equation}
	v_{P_k}(t)= \frac{E(t)-P_k(t)}{\|E(t)-P_k(t)\|},\quad k=1,\ldots,p.
	\label{eq:vPpp}
\end{equation}
It is well known that pure pursuit does not guarantee capture of the evader. However, it becomes a winning strategy if switching to \tg is adopted.
\begin{theorem}\label{th:PP_pot_optimal}
	The pure pursuit strategy is a potential switching strategy.
\end{theorem}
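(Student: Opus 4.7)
The plan is to show that one of Theorem~\ref{th:capture_segment} or Theorem~\ref{th:winning_strategies} becomes applicable at a finite time, by exploiting two monotonicity properties of pure pursuit.

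First I would establish: (i) $\|P_k(t)-E(t)\|$ is non-increasing for each $k$; (ii) $\|P_i(t)-P_j(t)\|$ is non-increasing for each pair $(i,j)$. Property (i) follows from $(P_k-E)\cdot v_{P_k}=-\|P_k-E\|$ together with $|(P_k-E)\cdot v_E|\le\|P_k-E\|$, so $\frac{d}{dt}\|P_k-E\|^2\le 0$. For the less obvious property (ii), I would use the law of cosines in the triangle $P_iP_jE$ to evaluate $(P_i-P_j)\cdot(v_{P_i}-v_{P_j})$ in terms of $d_i=\|P_i-E\|$, $d_j=\|P_j-E\|$, $d_{ij}=\|P_i-P_j\|$, obtaining
\begin{equation*}
\frac{d}{dt}\|P_i-P_j\|^2=\frac{d_i+d_j}{d_i d_j}\bigl[(d_i-d_j)^2-d_{ij}^2\bigr]\le 0,
\end{equation*}
which is non-positive by the triangle inequality $d_{ij}\ge|d_i-d_j|$ and strict unless $E$, $P_i$, $P_j$ are collinear with $E$ outside $\overline{P_iP_j}$.

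I would then split into two cases. In Case~A, there exists a finite time $t_0$ with $E(t_0)\notin\op{\cP(t_0)}$; restricting to the game-relevant situation $E(t_0)\in\cregm(t_0)$, Remark~\ref{rem1} gives $E(t_0)\in\widehat\creg(t_0)\setminus\op{\cP(t_0)}$, so Theorem~\ref{th:capture_segment} supplies a pair $(P_i,P_j)$ satisfying condition \eqref{eq:cond2vs1} at $t_0$. In Case~B, $E(t)\in\op{\cP(t)}$ for all $t\ge 0$; since $E$ is then strictly interior to $\cP$, no diameter-attaining pair can have $E$ collinear and outside its segment, so the strict-inequality regime of property (ii) applies and $D(t)=\max_{i,j}\|P_i(t)-P_j(t)\|$ is strictly decreasing. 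I would then argue that $D(t)$ drops to $\sqrt{3}\,\rc$ in finite time, allowing Theorem~\ref{th:winning_strategies} to conclude.

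The main obstacle is quantifying Case~B. Monotonicity alone gives only $D(t)\to D^*\ge 0$, and one must rule out $D^*>\sqrt{3}\,\rc$. A natural approach is to lower-bound the rate of decrease of $D(t)^2$ by a positive quantity depending on the geometric slack keeping $E$ away from $\partial\cP$, and then observe that $D^*>\sqrt{3}\rc$ combined with the asymptotic degeneracy forced by property (ii) would drive all diameter-realizing pursuers onto a single ray with $E$ outside their convex hull. This is incompatible with $E\in\op{\cP}$, producing a contradiction that forces either the finite-time diameter bound above or the occurrence of Case~A.
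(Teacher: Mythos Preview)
Your property (ii) is correct and the computation is clean, but the paper's proof shows it is unnecessary. The paper works only with property (i): since each $\delta_k(t)=\|E(t)-P_k(t)\|$ is non-increasing and bounded below, $\dot\delta_k(t)\to 0$, and the equality case in Cauchy--Schwarz then forces $v_E(t)-v_{P_k}(t)\to 0$ for \emph{every} $k$. Because $v_{P_k}$ is the unit vector from $P_k$ to $E$, all pursuers asymptotically lie on a single ray issuing from $E$, so $E(t)\notin\cP(t)$ for large $t$; continuity gives a finite crossing time of $\partial\cP$, and Theorem~\ref{th:capture_segment} concludes. There is no case split and no appeal to Theorem~\ref{th:winning_strategies}.

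Your Case~B route through Theorem~\ref{th:winning_strategies} has a structural obstruction beyond the quantification issue you flag. If $E(t)\in\op{\cP(t)}$ and the evader is uncaptured, then necessarily $D(t)>\sqrt{3}\,\rc$: the covering argument in the proof of Theorem~\ref{th:winning_strategies} shows that $D(t)\le\sqrt{3}\,\rc$ implies $\cP(t)\subset\bigcup_k\cC(P_k(t),\rc)$, which would force capture. Hence the diameter can never reach the threshold while Case~B persists, and Theorem~\ref{th:winning_strategies} is simply unavailable along this path. What actually rescues Case~B is your last sentence---the asymptotic degeneracy of \emph{all} pairs (not merely the diameter-realizing ones) collapses the whole configuration onto a ray with $E$ outside, contradicting $E\in\op{\cP}$---and that is precisely the paper's argument, obtained there more directly from property (i). So your proof is salvageable, but the detour through property (ii) and Theorem~\ref{th:winning_strategies} should be dropped: Case~B is not resolved by the diameter bound but by showing it is vacuous.
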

\begin{proof}
	Let $\delta_k(t)=\|E(t)-P_k(t)\|$. Then, by using \eqref{eq:vPpp},
	\begin{align*}
		\ds\frac{1}{2} \ds\frac{d \delta^2_k(t)}{dt} &= (E(t)-P_k(t))'(\dot{E}(t)-\dot{P}_k(t))
		=(E(t)-P_k(t))' \dot{E}(t) - \delta_k(t).
	\end{align*}
	Being $\|\dot{E}(t)\|=\|v_E(t)\|\leq 1$, one has $ \frac{d \delta^2_k(t)}{dt} \leq 0$ and in particular $ \frac{d \delta^2_k(t)}{dt} = 0$ if and only if $v_E(t)=v_{P_k}(t)$ given by \eqref{eq:vPpp}. Hence, $\lim_{t \rightarrow +\infty} \delta_k(t) = \bar\delta_k \geq 0$ and $\lim_{t \rightarrow +\infty} \dot\delta_k(t) = 0$, which in turns leads to $\lim_{t \rightarrow +\infty} v_{P_k}(t) = \lim_{t \rightarrow +\infty} v_E(t)$, $\forall k=1,\dots,p$. 
	Since, according to \eqref{eq:vPpp}, all pursuers point towards the evader, this means that all pursuers and the evader tend to be asymptotically aligned, with the evader lying outside the segment containing the pursuers. Therefore, $\lim_{t \rightarrow +\infty} E(t) \notin \cP$. By continuity of the agents' trajectories, $E(t)$ must have crossed $\partial\cP$ at some time instant and hence the conclusion follows by Theorem~\ref{th:capture_segment}.
\end{proof}

Theorems~\ref{th:switching_strategies} and \ref{th:PP_pot_optimal} are just examples of winning pursuit strategies based on switching; it is apparent that many other switching strategies can be devised by exploiting the geometric conditions in Theorems~\ref{th:capture_segment} and \ref{th:winning_strategies}. 

\section{Numerical simulations}\label{sec:simulations}

In order to show the benefits of using a switching strategy, several numerical simulations are reported in this section. Three pursuit strategies are considered: pure-pursuit (PPS), centroid-based (CS), and the strategy reported in \cite{zhou2016}, referred to as Voronoi-based strategy (VS). 
Notice that in \cite{zhou2016} it was proven that VS is a winning strategy.
The prefix S- is used to denote a given pursuit strategy when pursuers switch to \tg as soon as condition~\eqref{eq:cond2vs1} holds. For instance, S-PPS denotes the switching pure-pursuit strategy.

In all simulations, the maximum speed of the players is set to $1$ and the capture radius is $r=1$. In the figure illustrating the agents' trajectories, pursuers are drawn in blue and the evader in red. Initial conditions are marked with a square, while the final positions of the agents are denoted by a dot. A dashed circle with radius $r$ is drawn around the capture position of the evader. 

\begin{example}\label{ex:1}
	Let us consider a four-pursuer one-evader game, where pursuers are initially located at the vertices of a square with side length $20$ centered at the origin, i.e., $P_1(0)=[-10,-10]'$, $P_2(0)=[-10,10]'$, $P_3(0)=[10,-10]'$, $P_4(0)=[10,10]'$. The evader starts from $E(0)=[0,5]'$, and it goes upwards with maximum speed during each game.
	
	If the pursuers move according to PPS and CS, the evader can escape, and so such pursuit strategies are not winning in this scenario. On the contrary, if pursuers switch when \eqref{eq:cond2vs1} occur, the evader is captured at time $t=22.66$ and $t=33.91$ for S-PPS and S-CS, respectively. The trajectories traveled by the players in such cases are depicted in Fig.~\ref{fig:Example1_a} and \ref{fig:Example1_b}.
	
	When pursuers play the original version of the Voronoi-based strategy, they are able to capture the evader at time $t=20.02$, as depicted in Fig.~\ref{fig:Example1_c}. When playing the corresponding switching version, capture time is reduced to $t=16.61$. So, in this example, the capture time needed by VS is about $20\%$ greater than that needed by S-VS. The agents' trajectories are shown in Fig.~\ref{fig:Example1_d}, where it can be observed that in this case the capture point coincides with the centroid of all the four pursuers, after condition \eqref{eq:cond2vs1} has occurred.

	\begin{figure}[ht]
	\centering
	\includegraphics[width=0.6\linewidth]{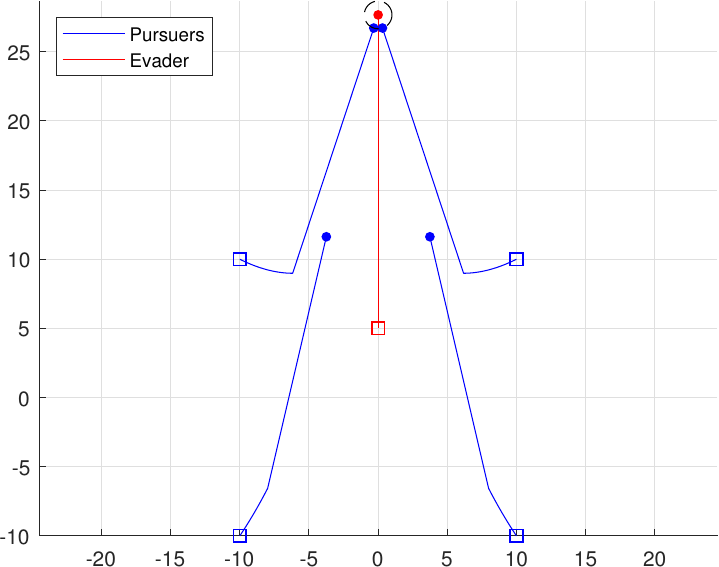}
	\caption{Example~\ref{ex:1}. The evader goes upwards while pursuers play S-PPS. Capture occurs at  $t=22.66$.}
	\label{fig:Example1_a}
\end{figure}
\begin{figure}[ht]
	\centering
	\includegraphics[width=0.6\linewidth]{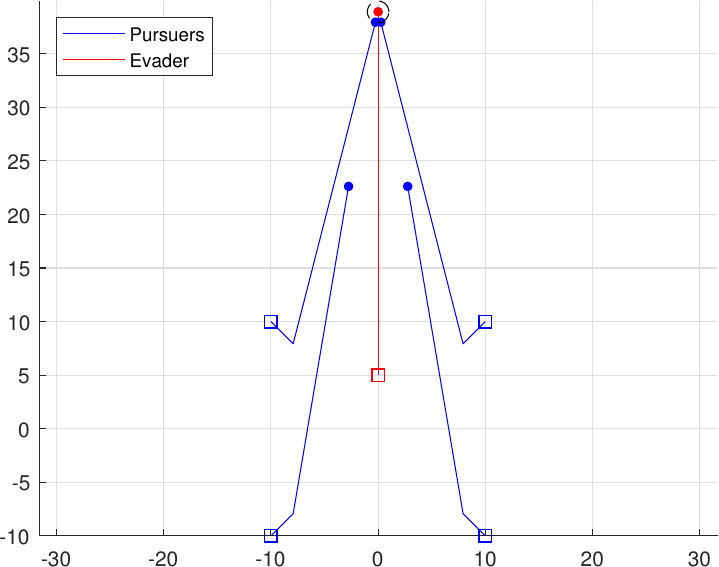}
	\caption{Example~\ref{ex:1}. The evader goes upwards while pursuers play S-CS. Capture occurs at  $t=33.91$.}
	\label{fig:Example1_b}
\end{figure}
	
	\begin{figure}[th]
		\centering
		\includegraphics[width=0.6\linewidth]{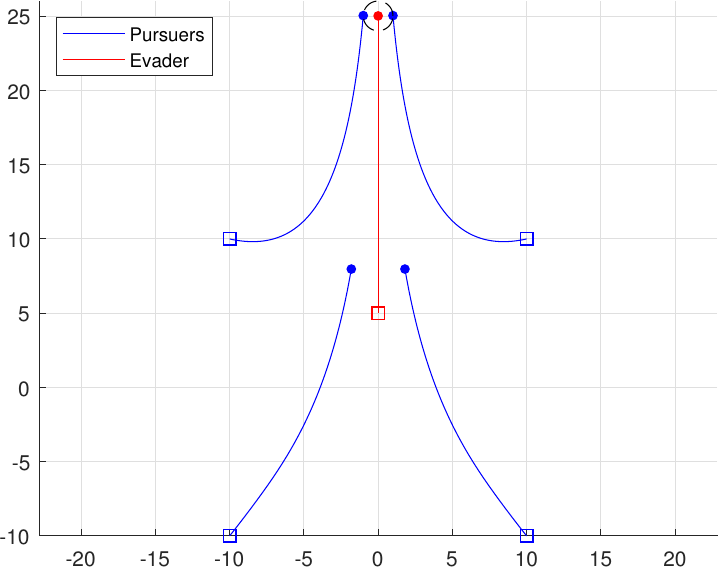}
		\caption{Example~\ref{ex:1}. The evader goes upwards while pursuers play VS. Capture occurs at  $t=20.02$.}
		\label{fig:Example1_c}
	\end{figure}
	\begin{figure}[th]
		\centering
		\includegraphics[width=0.6\linewidth]{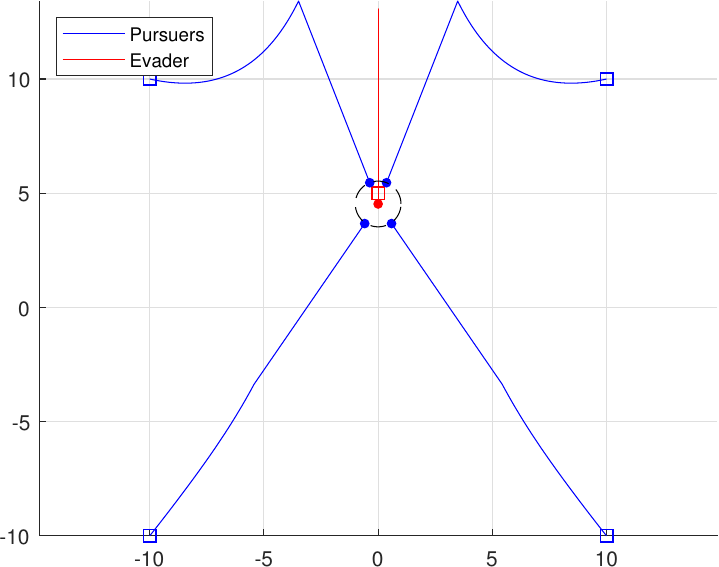}
		\caption{Example~\ref{ex:1}. The evader goes upwards while pursuers play S-VS. Capture occurs at  $t=16.61$.}
		\label{fig:Example1_d}
	\end{figure}
\end{example}

\begin{example}\label{ex:2}
	The aim of this example is to evaluate the benefit of using S-VS in place of VS. To this purpose, a simulation campaign with different number of pursuers located in different places is performed. In all simulations, the evader position is initially set to the origin. Pursuers start randomly inside a circle of radius $R_p$ centered at the origin. Only initial positions such that the evader belongs to the convex hull of the pursuers are considered.
	
	For a generic game, let us denote by $T_{_{VS}}$ and $T_{_{S\text{-}VS}}$ the capture time when pursuers play VS and S-VS, respectively. The convenience of adopting the switching strategy is measured through the ratio of the mentioned capture times, that is
	$
	\rho=\frac{T_{_{VS}}}{T_{_{S\text{-}VS}}}.
	$
	The evader strategy consists in pointing towards the farthest vertex of its own Voronoi cell at each time. 
	
	The number $p$ of pursuer ranges from $3$ to $10$, while the radius of the circle of the initial pursers' position varies as $R_p\in\{10,30,50\}$. For all possible combinations of $p$ and $R_p$, $100$ games are played. The corresponding boxplots of $\rho$ are reported in Fig.~\ref{fig:Example2_boxplots}. To increase readability, the upper limit of the $\rho$-axis is set to $2$. However, it is worthwhile stressing that for several games much larger values have been observed. For instance, for $p=3$ and $R_p=50$ the maximum ratio is $\rho=12.6$. Notice that in games with $\rho=1$ the switching condition \eqref{eq:cond2vs1} never occurs, so that $T_{_{VS}}=T_{_{S\text{-}VS}}$. On the whole, the simulation results demonstrate that playing the switching strategy leads to significantly smaller capture times than adopting the original Voronoi-based strategy.

\clearpage
	
	\begin{figure}[t]
		\centering
		\includegraphics[width=.8\linewidth]{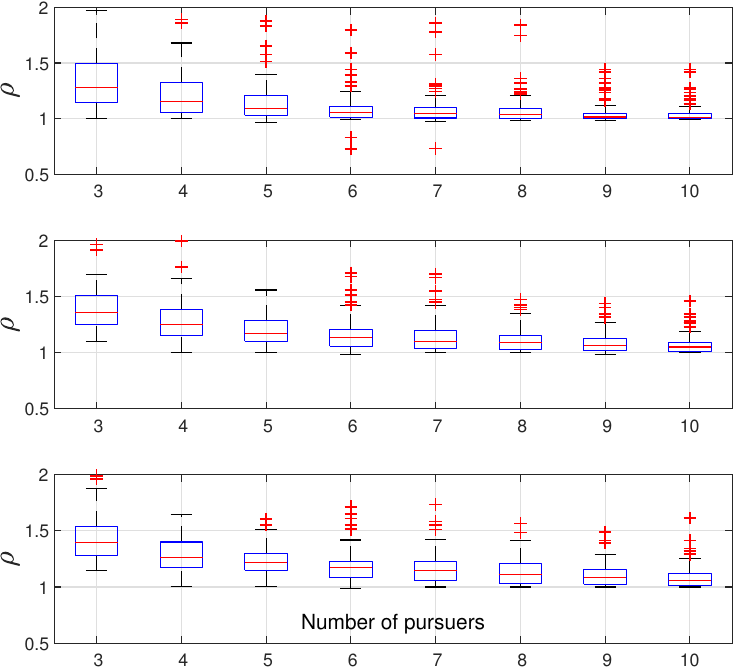}
		\caption{Example~\ref{ex:2}. Boxplot of the performance index $\rho$ for $R_p=10$ (top), $R_p=30$ (middle) and $R_p=50$ (bottom).}
		\label{fig:Example2_boxplots}
	\end{figure}
\end{example}

\section{Conclusions}\label{sec:conclusions}

A new family of switching pursuit strategies for a multi-pursuer single-evader game has been introduced. It is based on the key observation that the game can be reduced to a two-pursuer single-evader setting, for which a minimum-time solution is available. Conditions for transitioning between these two games have been derived, enabling the definition of new winning strategies and enhancing the performance of existing ones. Future work will focus on developing pursuit strategies that achieve the switching condition while minimizing the overall capture time or other relevant performance indexes. Moreover, the approach will be extended to handle multi-pursuer multi-evader games. The use of switching pursuit strategies in scenarios involving a superior evader will also be considered.

\clearpage
\section*{Appendix}

\subsection*{Proof of Theorem \ref{th:capture_segment}}

In order to prove Theorem \ref{th:capture_segment}, the following lemmas are introduced.
\begin{lemma}\label{lem:greater_d}
	Let $P_1,P_2$ and $E$ be such that $E\in\creg_{12}\backslash(\cC(P_1,\rc)\cup\cC(P_2,\rc))$, and set $P_3=P_2+\alpha (P_2-P_1)$, $\alpha>0$. Then, $T_{13}>T_{12}$.
\end{lemma}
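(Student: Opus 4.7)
The plan is to compare $T_{12}$ and $T_{13}$ by direct algebraic manipulation of the closed form \eqref{eq:function_fp}, exploiting the fact that the two capture times share the same denominator $2(\rc^2-y^2)$ once placed in the appropriate coordinate frame.

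First I would fix the reference frame of Fig.~\ref{fig:capture_region} for the pair $(P_1,P_2)$: $P_1=[-d,0]'$, $P_2=[d,0]'$. The assumption $P_3=P_2+\alpha(P_2-P_1)$ then gives $P_3=[(1+2\alpha)d,0]'$, and since $E\in\creg_{12}\backslash(\cC(P_1,\rc)\cup\cC(P_2,\rc))$ lies in the open stadium outside both capture disks, the coordinates $E=[x,y]'$ satisfy $-d<x<d$, $|y|<\rc$, and $\|E-P_1\|>\rc$. In the symmetric frame associated to $(P_1,P_3)$ the midpoint shifts to $[\alpha d,0]'$, the half-distance becomes $(1+\alpha)d$, and the evader coordinate along the pursuer axis is $x-\alpha d$; a direct expansion yields
$$\kappa_{13}=\kappa_{12}+2\alpha d(d+x),$$
so in particular $\kappa_{13}>\kappa_{12}$.

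Writing \eqref{eq:function_fp} as $T_{ij}=(\rc\,\kappa_{ij}+|y|D_{ij})/(2(\rc^2-y^2))$, with $D_{ij}$ denoting the radical, the denominator is strictly positive (since $|y|<\rc$) and identical for both pairs. Consequently it is enough to prove $D_{13}\ge D_{12}$: combined with $\kappa_{13}>\kappa_{12}$, this delivers $T_{13}>T_{12}$. Computing $D_{13}^2-D_{12}^2$, substituting the expression for $\kappa_{13}$, and using the identity $(d+x)^2-(\rc^2-y^2)=\|E-P_1\|^2-\rc^2$, I would obtain
\begin{equation*}
D_{13}^2-D_{12}^2=4\alpha d\bigl[(d+x)\kappa_{12}+\alpha d(\|E-P_1\|^2-\rc^2)+2x(\rc^2-y^2)\bigr].
\end{equation*}

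The crucial step is to recognize that the bracket factors as
$$(d+x)\kappa_{12}+\alpha d(\|E-P_1\|^2-\rc^2)+2x(\rc^2-y^2)=\bigl((1+\alpha)d-x\bigr)\bigl(\|E-P_1\|^2-\rc^2\bigr),$$
verified by expanding $\kappa_{12}=(d-x)(d+x)-(\rc^2-y^2)$ and regrouping the $(d+x)^2$ and $(\rc^2-y^2)$ contributions. Both factors on the right are strictly positive: $(1+\alpha)d-x>d-x>0$ because $x<d$ and $\alpha>0$, while $\|E-P_1\|>\rc$ by hypothesis. Hence $D_{13}^2>D_{12}^2\ge 0$, which gives $D_{13}>D_{12}$, and finally $T_{13}>T_{12}$.

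The main obstacle is spotting this factorization, since the bracket mixes three terms with no immediately apparent common structure; once $\kappa_{12}$ is expanded and the coefficients of $(d+x)^2$ and $(\rc^2-y^2)$ are collected, the common factor $(1+\alpha)d-x$ emerges cleanly, and the remaining arguments are routine sign checks.
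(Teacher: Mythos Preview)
Your proof is correct and follows essentially the same route as the paper's: both arguments work in the standard frame of Section~\ref{subsec:recall_2P1E}, observe that the denominator $2(\rc^2-y^2)$ is common to $T_{12}$ and $T_{13}$, and reduce the comparison to showing that the radicand $\kappa^2-4x^2(\rc^2-y^2)$ increases when $P_2$ is replaced by $P_3$, via the same key factorization with factor $\|E-P_1\|^2-\rc^2>0$. The only cosmetic difference is that the paper differentiates $T_{13}$ with respect to $\alpha$ and shows the derivative is positive, whereas you compute the finite difference $D_{13}^2-D_{12}^2$ directly; the algebra and the decisive sign checks are identical.
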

\begin{proof}
	W.l.o.g., let us refer to the reference frame introduced in Section~\ref{subsec:recall_2P1E}, see Fig.~\ref{fig:P3_far}. 
	Hence, from $E=[x,y]' \in\creg_{12}\backslash(\cC(P_1,\rc)\cup\cC(P_2,\rc))$, one has $-d <x <d$ and $(x \pm d)^2+y^2-r^2 >0$.
	Moreover, since $E\in\creg_{12}\backslash(\cC(P_1,\rc)\cup\cC(P_2,\rc))$, one has also $E\in\creg_{13}\backslash(\cC(P_1,\rc)\cup\cC(P_3,\rc))$.
	Therefore, the capture time $T_{13}$ of pursuers $P_1$ and $P_3$ can be computed by using \eqref{eq:function_fp}, in which $d$ and $x$ are replaced by $d+\alpha d$ and $x -\alpha d$, respectively. 
	One gets
	\begin{equation}\label{eq:T13}
		T_{13}=\frac{\kappa_\alpha \rc+|y|\sqrt{\kappa_\alpha^2-4(x-\alpha d)^2(\rc^2-y^2)}}{2(\rc^2-y^2)}
	\end{equation}
	in which $\kappa_\alpha=\kappa+2 \alpha d (d+x)$.
	In order to prove $T_{13}>T_{12}$, it is sufficient to show that $T_{13}$ in \eqref{eq:T13} is a strictly increasing function of $\alpha$, for $\alpha>0$.
	First, one has
	$$
	\frac{d \kappa_\alpha}{d \alpha}=2d(d+x)>0.
	$$
	Then, it remains to show that $\kappa_\alpha^2-4(x-\alpha d)^2(\rc^2-y^2)$ is strictly increasing in $\alpha$. Through some straightforward manipulations, one obtains
	$$
	\begin{array}{l}
		\ds\frac{d}{d \alpha} \left\{\kappa_\alpha^2-4(x-\alpha d)^2(\rc^2-y^2) \right \}
		= 4 \left\{ d(d-x)+2d^2 \alpha \right\} \left\{ (x+d)^2+y^2-r^2 \right\}
	\end{array}
	$$
	which is positive for all $\alpha>0$.
\end{proof}

\begin{lemma}\label{lem:E_belongs_D}
	Consider $P_1$, $P_2$ and $E$ such that $E \in \cD_{12} \backslash (\cC(P_1,\rc) \cup \cC(P_2,\rc) )$. Let $\cL_1$ and $\cL_2$ be the lines containing the segments $\overline{P_1 E}$ and $\overline{P_2 E}$, respectively. Then, let $Q_1$ and $Q_2$ be the intersections of $\cL_1$ and $\cL_2$ with the circumference $\partial\cC(H_{12}^+,T_{12}+\rc)$, opposite to $P_1$ and $P_2$, respectively. Define the region $\cW_{12}$ as the portion of the circle $\cC(H_{12}^+,T_{12}+\rc)$ bounded by the segments $\overline{Q_1 E}$, $\overline{Q_2 E}$ and the arc of circumference $\wideparen{Q_2 Q_1}$. Then, if a pursuer $P_3 \in \cW_{12}$, then either $E \in \cD_{13}$ or $E \in \cD_{23}$.
\end{lemma}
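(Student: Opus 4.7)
The plan is to reformulate the conclusion $E \in \cD_{13} \cup \cD_{23}$ as a statement that $P_3$ lies in the union of two explicit ``back-shadow'' regions attached to $P_1$ and $P_2$, and then verify that $\cW_{12}$ is contained in this union. Since $\cD_{ij}$ is the interior of the stadium of radius $\rc$ around the segment $\overline{P_i P_j}$, the condition $E \in \cD_{i3}$ is equivalent to the segment $\overline{P_i P_3}$ meeting the open disk $\op{\cC(E,\rc)}$. I would therefore define, for $i=1,2$,
\[
S_i = \{Z \in \rr^2 \colon \overline{P_i Z} \cap \op{\cC(E,\rc)} \ne \emptyset\},
\]
so that the claim becomes $\cW_{12} \subseteq S_1 \cup S_2$. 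Because $E \notin \cC(P_i,\rc)$, the apex $P_i$ lies strictly outside $\cC(E,\rc)$, and $S_i$ is the back-shadow of $\cC(E,\rc)$ seen from $P_i$: it is bounded by the two tangent rays from $P_i$ to $\partial\cC(E,\rc)$, taken beyond their tangency points, together with the far arc of $\partial\cC(E,\rc)$.

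The line $\cL_i$ passes through $P_i$ and the center $E$ of $\cC(E,\rc)$, hence $\cL_i$ is the axis of symmetry of $S_i$, and the whole ray from $E$ opposite to $P_i$ lies in $S_i$. In particular $\overline{E Q_1}\subset S_1$ and $\overline{E Q_2}\subset S_2$, so the two chord portions of $\partial\cW_{12}$ are already covered. To cover the arc $\wideparen{Q_2 Q_1}$ and the interior of $\cW_{12}$, I would parameterize $P_3$ on the arc and track the angles $\alpha(P_3)=\angle E P_1 P_3$ and $\beta(P_3)=\angle E P_2 P_3$. Under the previous geometric characterization, $P_3 \in S_i$ is equivalent to $\|P_i-E\|\sin(\angle E P_i P_3) < \rc$ (together with the projection of $E$ onto line $P_i P_3$ landing within $\overline{P_i P_3}$, which holds because $P_3$ lies on the far side of $E$ from $P_i$). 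At $P_3=Q_1$ we have $\alpha=0$ and at $P_3=Q_2$ we have $\beta=0$. The natural tool for the middle of the arc is the power-of-a-point identity
\[
\|P_i-E\|\cdot\|E-Q_i\| = \rc\,(2T_{12}+\rc),
\]
which follows from $E$ lying inside the circle $\cC(H_{12}^+,T_{12}+\rc)$ that contains both $P_i$ and $Q_i$ on its boundary; combined with the explicit expressions \eqref{eq:function_fp}--\eqref{eq:solution_2vs1} for $T_{12}$ and $h_{12}$ this pins down the geometry. Once the arc is covered, the interior of $\cW_{12}$ follows from the star-shapedness of each $S_i$ with respect to $P_i$: any interior point of $\cW_{12}$ can be pulled toward $P_i$ back onto $\partial\cW_{12}$ while staying inside $S_i$.

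The main obstacle is verifying the angular covering along the arc $\wideparen{Q_2 Q_1}$, namely showing that at every $P_3$ on it at least one of $\|P_1-E\|\sin\alpha(P_3) < \rc$ or $\|P_2-E\|\sin\beta(P_3) < \rc$ holds. A concrete route is to work in the reference frame of Section~\ref{subsec:recall_2P1E} with $P_1=[-d,0]'$, $P_2=[d,0]'$, $E=[x,y]'$, $H_{12}^+=[0,h_{12}]'$, and reduce the claim to a direct inequality in these coordinates using \eqref{eq:function_fp}--\eqref{eq:point_Hp}. A more synthetic alternative is to show that the tangent from $P_1$ to $\partial\cC(E,\rc)$ on the $Q_2$-side and the tangent from $P_2$ on the $Q_1$-side intersect inside $\cW_{12}$ or on its arc boundary; if so, the two half-planes on the ``inner'' side of each tangent jointly cover $\cW_{12}$, yielding the inclusion at once.
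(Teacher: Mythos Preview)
Your reformulation is correct and is exactly the geometric fact used in the last line of the paper's argument: $E\in\cD_{i3}$ iff the segment $\overline{P_iP_3}$ meets the open disk $\op{\cC(E,\rc)}$. However, the core of your plan---the angular covering along the arc $\wideparen{Q_2Q_1}$---is not carried out. You state two possible routes (a coordinate computation using \eqref{eq:function_fp}--\eqref{eq:point_Hp}, or a tangent-line intersection argument) but execute neither, and neither is obviously short: the coordinate route leads to a messy two-variable inequality, and for the tangent route you would still have to prove that the two inner tangents meet inside $\cW_{12}$, which is precisely the nontrivial part. As written, the proposal is a plan with its main obstacle still open.

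The paper bypasses this difficulty entirely with a one-line separator argument. Since $\|E-H_{12}^+\|=T_{12}$, the disk $\cC(E,\rc)$ is internally tangent to $\cC(H_{12}^+,T_{12}+\rc)$ at a single point $S$, and the radius segment $\overline{ES}$ (of length $\rc$) lies inside $\cW_{12}$ and splits it into two pieces. Now $P_1$ and $P_2$ sit on opposite sides of the line through $E$ and $S$ (they are on $\partial\cC(H_{12}^+,T_{12}+\rc)$, outside the wedge, one on each flank). Hence for any $P_3\in\cW_{12}$, at least one of the segments $\overline{P_1P_3}$, $\overline{P_2P_3}$ must cross $\overline{ES}$; whichever does so passes within distance $\rc$ of $E$, giving $E\in\cD_{13}$ or $E\in\cD_{23}$. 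This replaces your whole angular analysis by a single topological observation, and it is worth internalizing: the tangency point $S$ is the natural object here, not the tangent lines from $P_1,P_2$.
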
	
\begin{proof}
	Let us consider the region $\cW_{12}$ depicted in Fig.~\ref{fig:E_belongs_D}. Let $S=\partial\cC(E,\rc)\cap \partial\cC(H_{12}^+,T_{12}+\rc)$ be the point of tangency between the two circumferences. Notice that the segment $\overline{ES}$ divides $\cW_{12}$ in two disjoint regions. Thus, either $\overline{P_1 P_3}$ or $\overline{P_2 P_3}$ will intersect $\overline{ES}$.
	Let us assume that $\overline{P_2 P_3}\cap\overline{ES}\neq\emptyset$. Then, the minimum distance between $E$ and $\overline{P_2 P_3}$ is less than $\rc$, and hence $E\in\creg_{23}$. A similar argument can be adopted if $\overline{P_1 P_3}\cap\overline{ES}\neq\emptyset$.
	\begin{figure}[t]
		\centering
		\includegraphics[width=0.6\columnwidth]{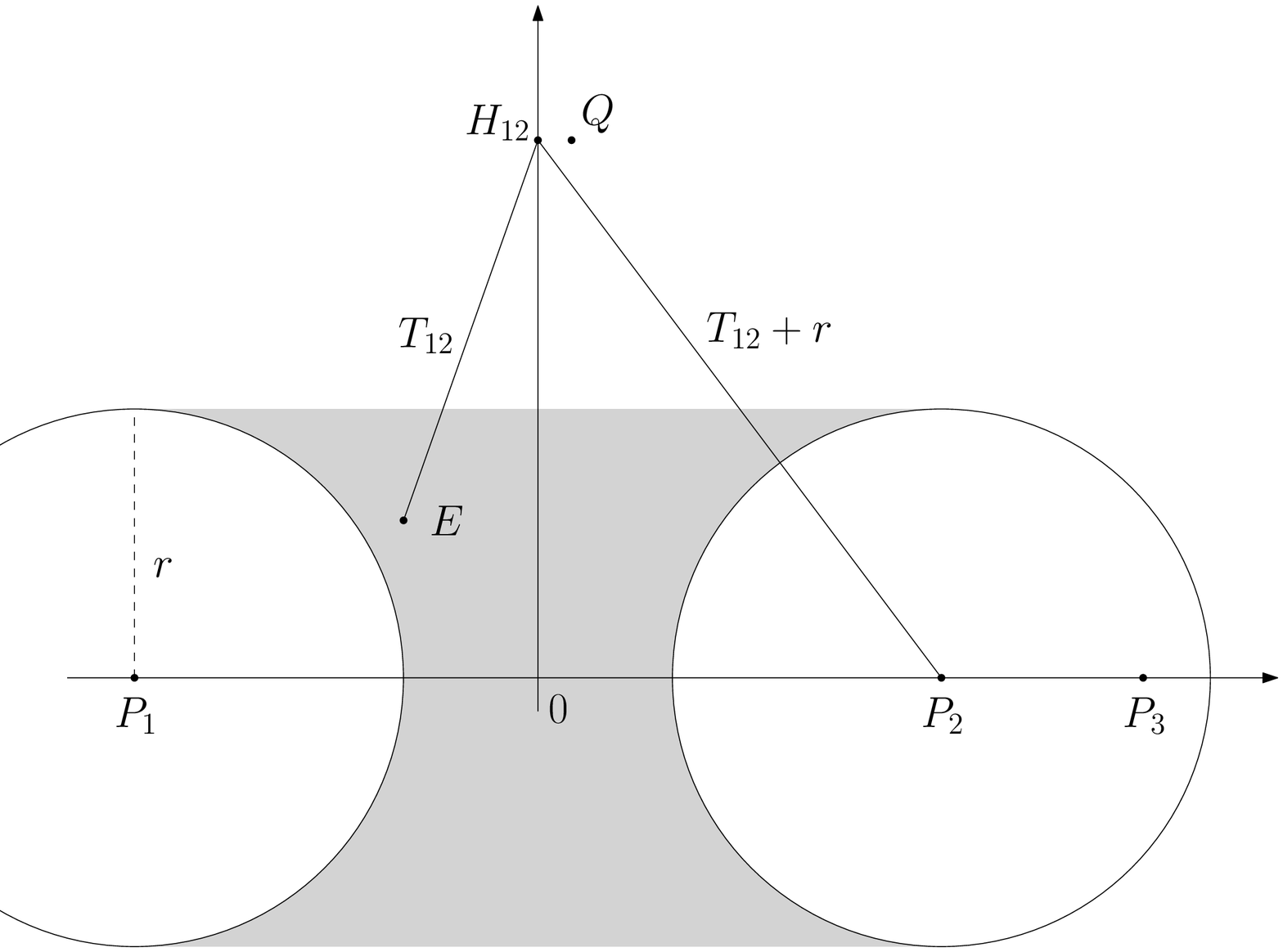}
		\caption{Sketch of the proof of Lemma~\ref{lem:greater_d}. The gray region denotes $\creg_{12}\backslash(\cC(P_1,\rc)\cup\cC(P_2,\rc))$.}
		\label{fig:P3_far}
	\end{figure}	
	\begin{figure}[t]
		\centering
		\includegraphics[width=0.6\columnwidth]{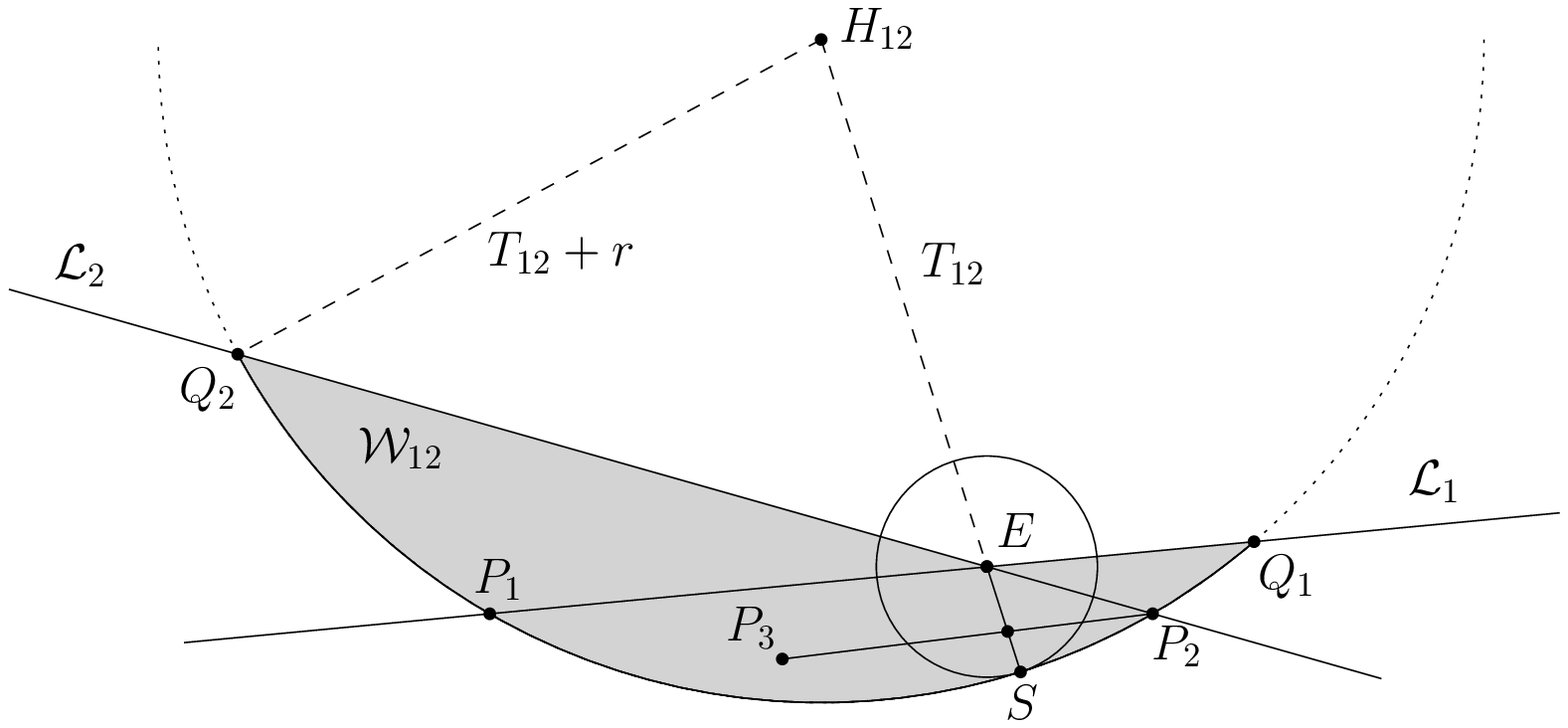}
		\caption{Sketch of the proof of Lemma~\ref{lem:E_belongs_D}.}
		\label{fig:E_belongs_D}
	\end{figure}
\end{proof}

Now, we can proceed to prove Theorem \ref{th:capture_segment}.
Let $P_i,P_j$ be the pair of pursuers such that $E \in \cD_{ij}$ and $T_{ij} \leq T_{hl}$, for all pairs $h,l \in \{1,...,p\}$ for which $E \in \cD_{hl}$.
We show that for this pursuer pair, condition \eqref{eq:cond2vs1b}, and hence \eqref{eq:cond2vs1}, holds.
Condition \eqref{eq:cond2vs1b} states that $\|P_k-H_{ij}\|\ge\|P_i-H_{ij}\|=T_{ij}+\rc$, ~$k=1,\ldots,p$.
By contradiction, assume that there exists a pursuer $P_z$ which satisfies $\|P_z-H_{ij}\|<T_{ij}+\rc$, i.e., $P_z\in\op{\cC(H_{ij},T_{ij}+\rc)}$.
Let us define $\cL_{iE}$ and $\cL_{jE}$ as the lines passing through $P_i,E$ and $P_j,E$, respectively.
Since $P_z\in\cP$ and by assumption $E\notin\op{\cP}$, then $P_z$ must belong to the region $\cW_{ij}$, shown in Fig.~\ref{fig:E_segment}. 
By Lemma~\ref{lem:E_belongs_D}, either $E\in\creg_{iz}$ or $E\in\creg_{jz}$. W.l.o.g., assume $E\in\creg_{jz}$.
Let $\cL_{jz}$ be the line crossing $P_j$ and $P_z$. It holds $\cL_{jz}\cap\cC(H_{ij},T_{ij}+\rc)=\{P_j,P_q\}$, where $P_q$ denotes a fictitious pursuer.
Then, by Lemma~\ref{lem:greater_d} one has $T_{jz}<T_{jq}=T_{ij}$.
Since by assumption $T_{ij}\le T_{hl}$ for all $h,l\in\{1,\ldots,p\}$, a contradiction occurs.
\unskip\nobreak\hfill $\blacksquare$

\bigskip

\begin{figure}[t]
	\centering
	\includegraphics[width=0.6\columnwidth]{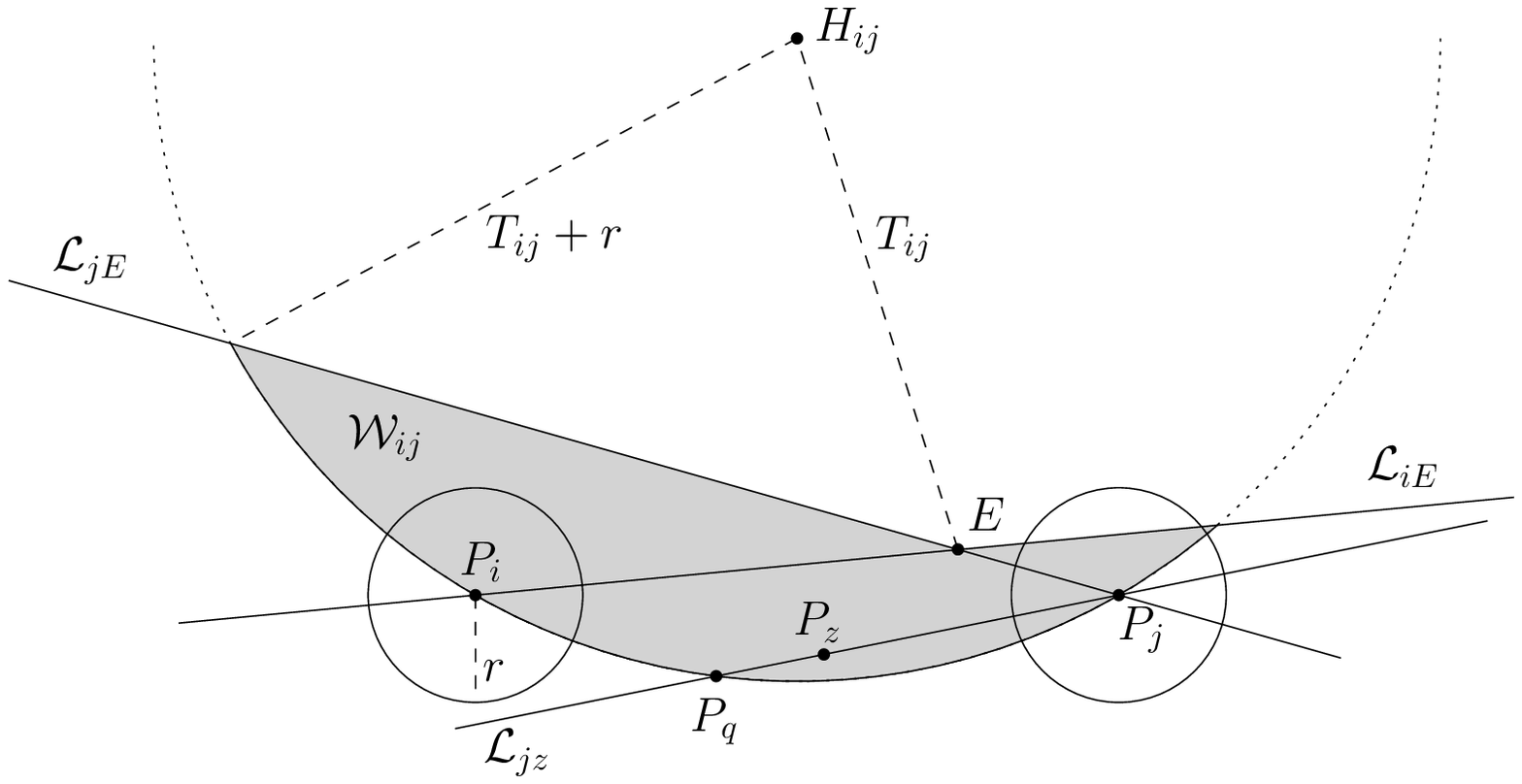}
	\caption{Sketch of the proof of Theorem~\ref{th:capture_segment}. The gray area $\cW_{ij}$ denotes the admissible region of $P_z$.}
	\label{fig:E_segment}
\end{figure}

\subsection*{Proof of Theorem \ref{th:winning_strategies}}
According to Theorem~\ref{th:capture_segment} and Remark~\ref{rem1}, if $E(0)\in\widehat\cregm(0)\backslash\op{\cP(0)}$, condition \eqref{eq:cond2vs1} already holds at time $t=0$. So, let us analyze the case $E(0)\in\op{\cP(0)}$.
Let us first show that \eqref{eq:winning} implies 
\begin{equation}\label{eq:cover}
	\cP(\bar t)\subset\bigcup_{i=1}^p \cC(P_i(\bar t),\rc).
\end{equation} 
If $p=3$ and the distance between each pursuer pair is exactly equal to $\sqrt{3}r$, then $\cP$ is an equilateral triangle and \eqref{eq:cover} follows from simple geometric arguments. Clearly, if the pursuers are closer to each other, \eqref{eq:cover} still holds. In the generic case of $p$ pursuers, $\cP$ can be partitioned in triangles and the previous reasoning can be repeated for each triangle. Therefore, \eqref{eq:cover} is satisfied for a generic polytope $\cP$.
This means that $E(\bar t)\notin\cP(\bar t)$, otherwise capture had already occurred. Since $E(0)\in\op{\cP(0)}$, by continuity of the agents' trajectories there exists $\hat t<\bar t$ such that $E(\hat t)\in\partial\cP(\hat t)$ and then, by Theorem~\ref{th:capture_segment}, condition \eqref{eq:cond2vs1} holds at time $\hat t$.
\unskip\nobreak\hfill $\blacksquare$


\bibliographystyle{IEEEtran}
\bibliography{lion_and_man}

\end{document}